\documentclass[11pt,a4paper,reqno]{amsart}
\usepackage{graphicx}%
\usepackage{multirow}%
\usepackage{amsmath,amssymb,amsfonts,mathtools,bm}%
\usepackage{amsthm}%
\usepackage{mathrsfs}%
\usepackage[title]{appendix}%
\usepackage{xcolor}%
\usepackage{textcomp}%
\usepackage{setspace}
\usepackage{manyfoot}%
\usepackage{booktabs}%
\usepackage[ruled]{algorithm2e}%
\usepackage{algorithmicx}%
\usepackage{algpseudocode}%
\usepackage{listings}%
\usepackage{booktabs}
\usepackage{pifont}
\usepackage{subcaption}%
\usepackage{makecell}
  \usepackage[top=1.5in, bottom=1.2in,
  left=1in, right=1in]{geometry}
\usepackage{hyperref}
\hypersetup{colorlinks = true, urlcolor = green, linkcolor = red,
  citecolor = blue }
\usepackage[round,authoryear]{natbib}
\usepackage{hyperref}
\hypersetup{
  colorlinks = true,
  urlcolor = green,
  linkcolor = red,
  citecolor = blue
}
\newcommand{\RV}{{\tt RV}}
\newcommand{\LDP}{\mathsf{LDP}}

\newcommand{\mv}[1]{\boldsymbol{#1}}

\newcommand{\qq}{\boldsymbol{q}}

\newcommand{\xx}{\boldsymbol{x}}
\newcommand{\XX}{\boldsymbol{\xi}}
\newcommand{\YY}{\boldsymbol{Y}}
\newcommand{\ZZ}{\boldsymbol{Z}}
\newcommand{\yy}{\boldsymbol{y}}
\newcommand{\zz}{\boldsymbol{z}}

\newtheorem{assumption}{Assumption}%[section]
\newtheorem{illustration}{Numerical Illustration}%[section]
\newtheorem{theorem}{Theorem}[section]
\newtheorem{definition}{Definition}%[section]
\newtheorem{example}{Example}%[section]
\newtheorem{lemma}{Lemma}[section]
\newtheorem{proposition}{Proposition}[section]

\theoremstyle{thmstyleone}%
\theoremstyle{thmstyletwo}%

\theoremstyle{thmstylethree}%

\title[Generating Plausible Stress Scenarios]{Generating Plausible Stress Scenarios via Large Deviations}
    \author{{\large A\MakeLowercase{nand}
    D\MakeLowercase{eo}}}
  \address{Indian Institute of Management, Bannerghatta Road, Bangalore 560076}
  \email{anand.deo@iimb.ac.in}
\begin{document}
\maketitle
\begin{abstract}
Financial stress tests based on handpicked scenarios can mislead risk management by overlooking genuinely dangerous configurations or overemphasizing shocks that are too implausible to be decision-relevant. We develop a systematic method for generating plausible stress scenarios for financial losses driven by exogenous risk factors. The method exploits a large-deviations principle: conditional on a large loss, the risk factors concentrate near the most likely stress configurations. We use this structure to define representative stress distributions and to extrapolate observed samples into more extreme scenarios while preserving the relative plausibility of stress mechanisms. As a result, the procedure can generate informative stress scenarios even when historical data contain few or no observations in the stressed regime. Numerical experiments on two financial network models show that the method recovers the stressed loss law and key stress diagnostics, including in settings where benchmark generators fail to generate any stressed samples.
\end{abstract}

\section{INTRODUCTION}\label{sec:intro}
Crises such as the 2008 global financial crisis and the COVID-19 pandemic have made stress testing a central tool for assessing the resilience of financial systems. In practice, supervisory stress tests specify shocks to underlying risk factors and propagate them through a portfolio or network model. The central difficulty is that useful stress scenarios must be both \emph{severe} and \emph{plausible}. A shock that is mild across all risk factors may leave the system within its normal 
operating range, whereas a handpicked combination of extreme market conditions may be so implausible that it obscures 
rather than clarifies systemic risk. This difficulty is especially acute in multivariate settings, where the joint configurations driving large losses are rarely, if ever, observed in historical data.

A useful theory of stress testing must therefore identify the rare joint configurations that drive systemic losses. This naturally leads to a large-deviations viewpoint: in stressed regimes, plausibility should be tied to the configurations under which the rare event is most likely to occur. What matters is not only the size of individual shocks, but how they combine to produce large losses. Ad hoc scenario design offers little guidance here, and the problem is not resolved simply by replacing such scenarios with a fitted multivariate model. Even models that match observed marginals and correlations may still route exceedances through structurally incorrect parts of the tail, producing severe scenarios that misrepresent the mechanisms of systemic stress. These observations lead to two central questions:
\begin{enumerate}
    \item[\textbf{(Q1)}] Can the probabilistic structure of the risk factors be used to characterize precisely what it means for a stress scenario to be plausible?
    \item[\textbf{(Q2)}] Can this characterization be turned into a data-driven extrapolation procedure that generates plausible stress scenarios even beyond the observed severity range?
\end{enumerate}

In this paper, we provide an answer to the above questions in two parts. { First, we define plausibility endogenously, meaning that the stress configurations are determined by the baseline tail model itself rather than imposed externally through a handpicked shock set.} Second, we use this characterization not merely to rank or locate adverse scenarios, but to construct a benchmark for generating stressed realizations whose exceedances concentrate on those same configurations. Our contributions are below.

\noindent\textbf{(i) A large-deviations characterization of stress scenarios:}
We use large-deviations theory to formalize plausibility in stressed regimes. Conditional on a high-loss event $\{L(\XX)>u\}$, the risk factors concentrate near a most-likely set of configurations that captures the dominant mechanisms by which systemic stress occurs. This yields an endogenous benchmark: a stress distribution is representative if, conditional on exceedance, it concentrates on the same most-likely configurations as the baseline model.

\smallskip
\noindent\textbf{(ii) Data-driven generation of plausible stress scenarios:}
This characterization leads to a natural generator: a sample-based extrapolation procedure that pushes observed samples to more extreme levels while preserving the configurations that drive large losses. Given $n$ baseline samples and an extrapolation factor $s>1$, the procedure generates realizations whose exceedances approximate the target conditional distribution. We show that representative scenarios can be generated for stress events with probabilities of order $n^{-\gamma(s)}$ for some $\gamma(s)>1$, including regimes in which the event is not directly observed in the available sample. Numerical experiments on two financial network models show that the method recovers the stressed loss law and key stress diagnostics in practice.

Our objective is not unbiased rare-event estimation, but the generation of scenarios that recover the stressed regime. At the same time, because the generated scenarios preserve the relevant tail structure, they can also be used to approximate tail risk quantities at the large-deviations scale.

\noindent\textbf{Comparison with Existing Literature:}
There is a large literature on stress testing and scenario-based risk assessment, including foundational frameworks for supervisory and portfolio stress testing (for instance, \cite{morgan1997creditmetrics,berkowitz1999coherent,kupiec2002stress}) and later work on banking-system and systemic-risk stress tests (see \cite{amini2012stress,schuermann2014stress,acharya2014testing}). Closest to our setting are approaches that formalize plausibility or identify likely drivers of stress. One line of work defines plausibility through neighborhoods around a baseline model: \cite{breuer2009find} identify severe scenarios within a local neighborhood, while \cite{breuer2012systematic} implement this idea in a credit-risk setting. A related literature studies reverse stress testing, which seeks the most likely scenarios leading to a prescribed adverse outcome (see \cite{mcneil2012multivariate,kopeliovich2015robust,glasserman2015stress}). More recently \cite{pesenti2019reverse} and \cite{millossovich2024theory} develop theoretical frameworks based on reweighting stress-scenario probabilities.

Much of this literature treats stress testing primarily as an \emph{identification} problem: the goal is to locate adverse configurations, either within an exogenously specified neighborhood of a baseline model or through reverse stress testing tied to a prescribed outcome. Our perspective differs in two respects. First, we define plausibility endogenously through a large-deviations characterization of the baseline tail, thereby identifying the stress mechanisms selected by the model itself. Second, we use this characterization not only to rank or locate adverse scenarios, but to construct a generative benchmark: stressed realizations are representative when their exceedances preserve the same large-deviations structure as the baseline model. Unlike reweighting approaches, which tilt the baseline measure toward an exogenously specified stress event, our method generates realizations directly from the model’s most-likely stress configurations.

{The present paper shares the tail-asymptotic setting of \cite{deo2025achieving}, but repurposes it for data-driven stress-scenario generation. \cite{deo2025achieving} work with a specified input distribution and use self-structuring transformations for importance sampling estimation of the scalar probability $P(L(\XX)>u)$. Here, we start from samples and use the same tail extrapolation idea to construct an empirical distribution of transformed exceedances for the conditional stress regime given $\{L(\XX)>u\}$. Thus, the contribution is not a tail-probability estimator, but a data-driven stress-generation framework whose outputs are assessed by recovery of the baseline large-deviations geometry and stressed-system diagnostics such as severity, breadth, and concentration.
}

\noindent\textbf{Paper Outline:}
Section~\ref{sec:setup} formulates the stress-scenario generation problem precisely. Section~\ref{sec:LDP_Stress} develops the large-deviations framework and the benchmark notion of representative stress generation. Section~\ref{sec:numericals} validates the theory numerically on two financial network examples. Proofs are collected in Section~\ref{sec:proofs}.

\noindent\textbf{Notation:}
Vectors are denoted in boldface; for example, $\zz=(z_1,\ldots,z_d)$. For a random vector $\XX=(\xi_1,\ldots,\xi_d)$, let $\mv\Lambda$ denote the component-wise marginal cumulative hazard transform, defined by $\Lambda_i(t):=-\log \mathbb P(\xi_i>t)$ for $i=1,\ldots,d$. We let $\qq$ denote the component-wise left inverse of $\mv\Lambda$, and define $\qq^\star:=\lim_{t\to\infty}{\qq(t)}/{\|\qq(t)\|_\infty}$,
whenever the limit exists. For $\mv\alpha=(\alpha_1,\ldots,\alpha_d)$, we write $\alpha_\star:=\min_{1\le i\le d}\alpha_i$. For $A\subset\mathbb R^d$, we let $A_\delta:=\{\yy:d(\yy,A)\le \delta\},$
where
$d(\yy,A):=\inf\{\|\yy-\zz\|:\zz\in A\}$
is the distance from $\yy$ to $A$.
\section{TECHNICAL PRELIMINARIES AND PROBLEM SETUP}\label{sec:setup}
Our analysis rests on two ingredients: a marginal standardization that puts the risk factors on comparable tail scales, and a large-scale approximation of the loss function. The first step is to standardize the coordinates so that tail growth is measured on a common scale across risk factors.
\begin{definition}\label{def:RV}
A function $f:\mathbb R_+\to\mathbb R_+$ is said to be \textit{regularly varying} with index $\gamma\in\mathbb R$ if
\[
\frac{f(tx)}{f(t)}\to x^\gamma
\qquad \text{for every } x>0
\]
as $t\to\infty$. We denote this by $f\in\RV(\gamma)$, or simply $f\in\RV$ when the index is not important. 
\end{definition}

{ Regular variation is used to construct the marginal transforms $\Lambda_i$, which put all coordinates on a common tail scale. This standardization isolates joint tail dependence from differences in marginal tail decay, allowing the geometry of rare losses to be studied on the transformed scale.}

\begin{definition}\label{def:LDP}
A family of random vectors $\{\ZZ_u\}_{u>0}$ is said to satisfy a \textit{large deviations principle} with speed $g:\mathbb R_+\to\mathbb R_+$ satisfying $g(u)\to\infty$ and rate function $I^\star$ if
\begin{equation}\label{eqn:LDP_def}
\limsup_{u\to\infty}[g(u)]^{-1}\log \mathbb P(\ZZ_u\in F)
\le -\inf_{\zz\in F} I^\star(\zz)
\quad \text{and} \quad
\liminf_{u\to\infty}[g(u)]^{-1}\log \mathbb P(\ZZ_u\in G)
\ge -\inf_{\zz\in G} I^\star(\zz)
\end{equation}
for every closed set $F$ and open set $G$. In this case, we write $\ZZ_u\in\LDP(g,I^\star)$.
\end{definition}

Following \cite{deValk}, we model tail dependence through an LDP for the marginally standardized vector $\YY=\mv\Lambda(\XX)$. This is the key structural input in our analysis, and the extrapolation procedure is designed precisely to preserve this large-deviations geometry.

\subsection{Problem Setup}
{

Consider a financial system, such as a clearing network or a reinsurance system, whose loss depends on a vector of nonnegative risk factors $\XX\in\mathbb R_+^d$. Let $L(\zz)$ denote the loss incurred in state $\zz$, and assume that $\XX$ admits a density $f_{\XX}$. Fix a high threshold $u$. The benchmark for stress-scenario generation is the conditional distribution $\textrm{Law}(\XX\mid L(\XX)>u)$, supported on the exceedance set $\mathcal S_u=\{\zz\in\mathbb R_+^d:L(\zz)>u\}$. Since direct resampling yields few or no exceedances at such thresholds, our objective is to instead construct a data-driven extrapolation procedure 
to approximate this conditional distribution.
}

% Consider a financial system, such as a clearing network or a reinsurance system, whose loss depends on a vector of nonnegative risk factors $\XX\in\mathbb R_+^d$. Let $L(\zz)$ denote the loss incurred in state $\zz$, and assume that $\XX$ admits a density $f_{\XX}$. Fix a high threshold $u$. The stress-scenario generation problem is to sample from the conditional law $\textrm{Law}(\XX\mid L(\XX)>u)$,
% whose support is the exceedance set $\mathcal S_u=\{\zz\in\mathbb R_+^d:L(\zz)>u\}$. 

Stress-scenario generation is inherently a tail problem: when $u$ is large, the scenarios that dominate $\{L(\XX)>u\}$ are determined by the extreme behavior of $\XX$ and the large-loss geometry of $L(\cdot)$ rather than by behavior in the body of the distribution. To capture this, we impose the following assumptions on the tail behavior of the risk factors and the loss function.

\begin{assumption}[\textbf{Tail behavior of risk factors}]\label{assume:ldp}
Let $\YY=\mv\Lambda(\XX)$ denote the marginally standardized risk vector. Assume that $\YY$ admits a density of the form
\[
f_{\YY}(\yy)=p(\yy)\exp(-\varphi(\yy)),
\]
where
\begin{equation}\label{eqn:ldp_assume}
\lim_{n\to\infty}n^{-1}\varphi(n\yy_n)=\varphi^\star(\yy)
\qquad\text{and}\qquad
\lim_{n\to\infty}n^{-\varsigma}\log p(n\yy_n)=0
\end{equation}
for every sequence $\yy_n\to\yy\neq \mathbf 0$ and every $\varsigma>0$. In addition, there exist positive numbers $\alpha_1,\ldots,\alpha_d$ such that $\Lambda_i\in\RV(\alpha_i)$ for each $i$.
\end{assumption}
Assumption~\ref{assume:ldp} says that after marginal standardization, the transformed risk vector admits a large-deviations description with rate function $\varphi^\star$. This includes a broad class of dependence structures, such as Gaussian, elliptical, and certain Archimedean copula models, while the condition $\Lambda_i\in\RV(\alpha_i)$ allows for a wide range of light-tailed marginals. 

\begin{assumption}[\textbf{Asymptotic homogeneity of loss}]\label{assume:loss}
The loss function $L(\cdot)$ satisfies
\begin{equation}\label{eqn:loss}
n^{-1}L(n\zz_n)\to L^\star(\zz)
\end{equation}
for every sequence $\zz_n\to\zz$, where $L^\star$ is a non-degenerate limiting function in the sense that $\{\zz:L^\star(\zz)>0\}\neq \varnothing$.
\end{assumption}
{Assumption~\ref{assume:loss} only requires that the loss admit a
first-order homogeneous approximation at high stress levels; this limiting
shape determines which configurations dominate the stressed regime. Nonlinear losses such as $L(\zz)=\|\zz\|_2$ satisfy the condition directly. More generally, if $L_0$ has tail order $r>0$, meaning that $n^{-r}L_0(n\zz_n)\to L_0^\star(\zz)$, then the transformed loss $L=L_0^{1/r}$ satisfies $n^{-1}L(n\zz_n)\to \bigl(L_0^\star(\zz)\bigr)^{1/r}$. Hence losses with general tail order can be put into the normalization used in Assumption~\ref{assume:loss}. The next two examples illustrate this flexibility for thresholded linear and network-clearing losses.} 
\begin{example}[\textbf{Bipartite Reinsurance Network}]\label{eg:reinsurance}\em
Consider the bipartite agent-object market model (see  \cite{kley2016risk}), in which agent exposures are linear in object losses. Let $\XX=(\xi_1,\dots,\xi_d)\in\mathbb R_+^d$ denote object losses, and let $A_{ij}\ge 0$ denote the exposure of agent $i$ to object $j$, with $A_{ij}=0$ whenever agent $i$ is not exposed to object $j$. Writing $A\XX= ((A\XX)_1,\ldots, (A\XX)_q) =  (\mv a_1^\intercal \XX, \ldots, \mv a_q^\intercal \XX)$ for the vector of agent losses, and letting $c_i>0$ denote the capital of agent $i$, define the aggregate market distress by
\begin{equation}\label{eqn:loss_reinsurance}
L(\XX):=\sum_{i=1}^q\bigl(\mv a^\intercal_i \XX -c_i\bigr)_+.
\end{equation}
\begin{lemma}\label{lem:verify_reinsurance}
The loss function in \eqref{eqn:loss_reinsurance} satisfies Assumption~\ref{assume:loss}. In particular, $L^\star(\zz)=\sum_{i=1}^q \mv a_i^\intercal \zz$. 
\end{lemma}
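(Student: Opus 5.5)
The plan is a direct computation: use the positive homogeneity of the map $x\mapsto x_+$ and its continuity, one agent at a time. Fix $\zz\in\mathbb R_+^d$ and a sequence $\zz_n\to\zz$; since the risk factors are nonnegative, I take the sequences in the domain $\mathbb R_+^d$. For each agent $i$, the identity $(\lambda x)_+=\lambda x_+$ for $\lambda>0$ gives
\[
n^{-1}\bigl(\mv a_i^\intercal (n\zz_n)-c_i\bigr)_+=\bigl(\mv a_i^\intercal \zz_n - c_i/n\bigr)_+ .
\]
As $n\to\infty$, $\mv a_i^\intercal\zz_n\to \mv a_i^\intercal\zz$ because the inner product is continuous, and $c_i/n\to 0$. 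Since $x\mapsto x_+$ is continuous (indeed $1$-Lipschitz), the right-hand side converges to $(\mv a_i^\intercal\zz)_+$.

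Summing over the finitely many agents $i=1,\ldots,q$ then gives
\[
n^{-1}L(n\zz_n)\to \sum_{i=1}^q(\mv a_i^\intercal\zz)_+ .
\]
To get the stated form, note that $A_{ij}\ge 0$ and $\zz\in\mathbb R_+^d$ imply $\mv a_i^\intercal\zz\ge 0$. Hence $(\mv a_i^\intercal\zz)_+=\mv a_i^\intercal\zz$, and so $L^\star(\zz)=\sum_{i=1}^q\mv a_i^\intercal\zz$.

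It remains to check the non-degeneracy requirement $\{\zz:L^\star(\zz)>0\}\neq\varnothing$. Here I use that the exposure matrix is nontrivial, i.e.\ some $A_{ij}>0$, which is implicit in the model since otherwise the market carries no exposures. Taking $\zz=\mathbf e_j$ gives $L^\star(\mathbf e_j)\ge A_{ij}>0$; alternatively, $\zz=\mathbf 1$ gives $L^\star(\mathbf 1)=\sum_{i,j}A_{ij}>0$.

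None of these steps is a genuine obstacle. The only points needing care are the two implicit conventions: that the limit is taken along sequences in $\mathbb R_+^d$ (otherwise the positive part cannot be dropped and $L^\star$ must be written as $\sum_i(\mv a_i^\intercal\zz)_+$), and that $A\neq 0$ for non-degeneracy. I would state both explicitly in the proof.
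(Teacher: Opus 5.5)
Your proof is correct and uses the same argument the paper relies on. The paper writes out no separate proof of this lemma, but its proof of Lemma~\ref{lem:eisenberg_clearing} contains the identical positive-homogeneity computation $t^{-1}(\mv a_j^\intercal(t\zz_t)-r_j)_+=(\mv a_j^\intercal\zz_t-t^{-1}r_j)_+\to\mv a_j^\intercal\zz$, with the positive part dropped via $\mv a_j\ge\mv 0$ and $\zz\ge\mv 0$. Your explicit conditions, that sequences are taken in $\mathbb R_+^d$ and that non-degeneracy requires some $A_{ij}>0$, are correct and make precise what the paper leaves implicit.
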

\end{example}
The next example shows that the asymptotic homogeneity condition also holds in a nontrivial network model with contagion effects. 
\begin{example}[\textbf{State-Dependent Clearing Network}]\label{eg:clearing}\em
Consider a banking network with $m$ firms and exogenous loss drivers $\XX=(\xi_1,\dots,\xi_d)\in\mathbb R_+^d$. Firm $j$ incurs gross loss $C_j(\XX):=\mv a_j^\intercal \XX$ for $\mv a_j\in\mathbb R_+^d$,
and has state-dependent nominal liabilities
\[
\bar p_j(\XX):=\bigl(C_j(\XX)-r_j\bigr)_+, \qquad r_j\ge 0.
\]
Thus, liabilities activate only once the loss of firm $j$ exceeds its retention level $r_j$.

The network structure is described by fixed relative liability weights $\lambda_{ji}\ge 0$, where $\lambda_{ji}$ denotes the fraction of firm $j$'s total liabilities owed to firm $i$. Assume that $\sum_{i=1}^m \lambda_{ji}=1$ for each $j=1,\dots,m$. The nominal liability of firm $j$ to firm $i$ is 
\[
R_{ij}(\XX):=\lambda_{ji}\,\bar p_j(\XX).
\]
Let $x_j(\XX)\ge 0$ denote the external resources of firm $j$ in state $\XX$, that is, funds from outside the interbank network available to meet its liabilities, such as recoveries from external assets or payments from non-network counter-parties.  The clearing vector $p(\XX)$ solves the system (see \cite{eisenberg2001systemic}) 
\[
p_j(\XX)=\min\Big\{\bar p_j(\XX),\ x_j(\XX)+\sum_{k=1}^m \lambda_{kj}\,p_k(\XX)\Big\},
\qquad j=1,\dots,m.
\]
Since payments are made pro rata, the realized payment from firm $j$ to firm $i$ is $\lambda_{ji}p_j(\XX)$. The unpaid liability is therefore
\[
U_{ij}(\XX):=R_{ij}(\XX)-\lambda_{ji}p_j(\XX)
=\lambda_{ji}\bigl(\bar p_j(\XX)-p_j(\XX)\bigr).
\]
We take the aggregate loss to be total unpaid obligations:
\[
L(\XX):=\sum_{i=1}^m\sum_{j=1}^m U_{ij}(\XX)
=\sum_{j=1}^m \bigl(\bar p_j(\XX)-p_j(\XX)\bigr),
\]
where the second equality uses $\sum_{i=1}^m \lambda_{ji}=1$. If external resources admit a corresponding first-order limit, then the clearing system itself admits a homogeneous asymptotic counterpart. 

\begin{lemma}\label{lem:eisenberg_clearing}
Assume that for each $j=1,\dots,m$, $t^{-1}x_j(t\zz_t)\to x_j^\star(\zz)$ for every sequence $\zz_t\to\zz$, and define $\bar p_j^\star(\zz):=\mv a_j^\intercal \zz$. Assume further that, for each $\zz$, the limiting clearing system
\[
p_j^\star(\zz)=\min\Big\{\bar p_j^\star(\zz),\ x_j^\star(\zz)+\sum_{k=1}^m \lambda_{kj}\,p_k^\star(\zz)\Big\},
\qquad j=1,\dots,m,
\]
admits a unique clearing vector $p^\star(\zz)$. Then the loss function $L(\cdot)$ satisfies Assumption~\ref{assume:loss} with $L^\star(\zz)=\sum_{j=1}^m \bigl(\bar p_j^\star(\zz)-p_j^\star(\zz)\bigr)$.
\end{lemma}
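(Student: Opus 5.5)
Fix $\zz$ and a sequence $\zz_t\to\zz$. Write $\pi^{(t)}:=t^{-1}p(t\zz_t)$, $\bar\pi^{(t)}_j:=t^{-1}\bar p_j(t\zz_t)=(\mv a_j^\intercal\zz_t-r_j/t)_+$ and $e^{(t)}_j:=t^{-1}x_j(t\zz_t)$. The Eisenberg--Noe fixed-point equation is positively homogeneous in $(p,\bar p,x)$. Dividing the clearing system at state $t\zz_t$ by $t$ therefore shows that $\pi^{(t)}$ is a clearing vector for the data $(\bar\pi^{(t)},e^{(t)})$:
\[
\pi^{(t)}_j=\min\Big\{\bar\pi^{(t)}_j,\ e^{(t)}_j+\sum_{k}\lambda_{kj}\pi^{(t)}_k\Big\}.
\]
By hypothesis, and by continuity of $(\cdot)_+$ and of linear maps, we have $\bar\pi^{(t)}\to\bar p^\star(\zz)$ and $e^{(t)}\to x^\star(\zz)$. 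Since
\[
t^{-1}L(t\zz_t)=\sum_j\bigl(\bar\pi^{(t)}_j-\pi^{(t)}_j\bigr),
\]
it suffices to show $\pi^{(t)}\to p^\star(\zz)$. The conclusion $L^\star(\zz)=\sum_j(\bar p^\star_j(\zz)-p^\star_j(\zz))$ then follows.

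\textbf{Step 1 (compactness).} Clearing vectors satisfy $0\le\pi^{(t)}\le\bar\pi^{(t)}$. Because $\bar\pi^{(t)}$ converges, the family $\{\pi^{(t)}\}$ is bounded in $\mathbb R^m$.

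\textbf{Step 2 (limit points solve the limit system).} Let $\pi$ be any subsequential limit of $\pi^{(t)}$. The map
\[
(\pi,\bar\pi,e)\mapsto \min\Big\{\bar\pi_j,\ e_j+\sum_k\lambda_{kj}\pi_k\Big\}
\]
is continuous, being a composition of min and affine maps. Passing to the limit along the subsequence therefore gives
\[
\pi_j=\min\Big\{\bar p^\star_j(\zz),\ x^\star_j(\zz)+\sum_k\lambda_{kj}\pi_k\Big\}\quad\text{for all } j.
\]
So $\pi$ is a clearing vector of the limiting system.

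\textbf{Step 3 (uniqueness).} By the assumed uniqueness, $\pi=p^\star(\zz)$. Every subsequence of the bounded family $\pi^{(t)}$ has a further subsequence converging to $p^\star(\zz)$, so the whole family converges to $p^\star(\zz)$.

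\textbf{Step 4 (non-degeneracy).} We must still show $\{\zz:L^\star(\zz)>0\}\ne\varnothing$. I expect this to be the main obstacle, since it does not follow from the hypotheses as stated without some structure on $x^\star$. My first attempt would be the direction $\zz=\mv e_i$ for a driver $i$ with $a_{ji}>0$ for some $j$, checking whether the external resources grow more slowly than liabilities there, i.e. $x^\star(\zz)$ is small relative to $\bar p^\star(\zz)$. If $x^\star\equiv0$ in such a direction, then summing the clearing equations gives
\[
\sum_j p^\star_j\le\sum_j\sum_k\lambda_{kj}p^\star_k=\sum_k p^\star_k .
\]
Equality is forced throughout, so the $\min$ is attained by the second argument wherever it binds. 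The minimal solution is then $p^\star=0$, and uniqueness forces $L^\star(\zz)=\sum_j\bar p^\star_j(\zz)>0$. In general I would state this as a mild standing condition, namely that $\sum_j x_j^\star(\zz)<\sum_j\bar p_j^\star(\zz)$ for some $\zz$. Under that condition, the same summation argument gives
\[
\sum_j p^\star_j\le\sum_j x^\star_j+\sum_j p^\star_j\cdot(\text{equality only where unconstrained}),
\]
which yields $L^\star(\zz)>0$.

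Apart from Step 4, the argument is routine.
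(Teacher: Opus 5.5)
Your Steps 1--3 are essentially the paper's proof. You rescale the fixed-point equation by $t$, use $0\le p^{(t)}\le\bar p^{(t)}$ for boundedness, show that every subsequential limit solves the limiting clearing system, and conclude by uniqueness. The paper does the middle step via uniform convergence of $\Phi_t(\cdot\,;\zz_t)$ to $\Phi^\star(\cdot\,;\zz)$ on bounded sets plus a triangle inequality, which amounts to your joint-continuity argument. The paper stops there and never checks the clause $\{\zz:L^\star(\zz)>0\}\neq\varnothing$ of Assumption~\ref{assume:loss}. You are right that this clause is not implied by the stated hypotheses; the example below satisfies all of them with $L^\star\equiv 0$. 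Your sub-case $x^\star(\zz)=0$ is fine: $p=0$ then solves the limit system, so uniqueness forces $p^\star(\zz)=0$.

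The gap is your general standing condition $\sum_j x_j^\star(\zz)<\sum_j\bar p_j^\star(\zz)$, which does not imply $L^\star(\zz)>0$. If you sum $p_j^\star\le x_j^\star+\sum_k\lambda_{kj}p_k^\star$ over $j$ and use $\sum_j\lambda_{kj}=1$, all interbank flows cancel and you are left with the vacuous $0\le\sum_j x_j^\star$. Aggregate external resources are the wrong yardstick, because interbank liabilities are offset by interbank claims.

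Concretely, take $d=1$, $m=2$, $\mv a_1=\mv a_2=1$, $\lambda_{12}=\lambda_{21}=1$, $x_1(z)=\epsilon z$ with $\epsilon\in(0,1)$, and $x_2\equiv 0$. Then $\bar p^\star(z)=(z,z)$ and $x^\star(z)=(\epsilon z,0)$. The limit system $p_1=\min\{z,\epsilon z+p_2\}$, $p_2=\min\{z,p_1\}$ has the unique solution $(z,z)$ for every $z\ge 0$: $p_1\le z$ forces $p_2=p_1$, and then $p_1=\min\{z,\epsilon z+p_1\}$ forces $p_1=z$. So all hypotheses hold and your condition holds for every $z>0$, yet $L^\star\equiv 0$.

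The correct condition is firm-level. Since $p^\star\le\bar p^\star$ componentwise, $L^\star(\zz)=0$ iff $p^\star(\zz)=\bar p^\star(\zz)$. By uniqueness this holds iff $\bar p^\star(\zz)$ itself solves the limit system, i.e.\ iff $\bar p_j^\star(\zz)\le x_j^\star(\zz)+\sum_k\lambda_{kj}\bar p_k^\star(\zz)$ for all $j$. Non-degeneracy is therefore equivalent to the existence of some $\zz$ and $j$ with $\mv a_j^\intercal\zz>x_j^\star(\zz)+\sum_k\lambda_{kj}\mv a_k^\intercal\zz$, that is, a firm that is short even when all its interbank claims are paid in full. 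That is the hypothesis that needs to be added.
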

\end{example}
\section{STRESS TESTING VIA LARGE DEVIATIONS} \label{sec:LDP_Stress}
\subsection{A Large-Deviations Benchmark for Stress Scenarios}
{
We begin by identifying the tail states through which the stress event $\{L(\XX)>u\}$ is most likely to occur at large $u$. The large-deviations characterization below expresses this through a rate minimization problem. The minimizers form the dominant set: among all stress-producing points, they have the smallest large-deviations rate and hence the largest probability on the exponential scale. We say that scenarios are plausible when, after tail standardization, they lie near this set. The construction is endogenous because the set is determined by $(\varphi^\star,L^\star)$, rather than by an externally imposed shock.}

Under Assumptions~\ref{assume:ldp}-\ref{assume:loss}, Theorem~2 of
\cite{deo2025achieving} implies that as $u\to\infty$
\begin{equation}\label{eqn:LDP_asymp}
  \mathbb P\!\big(L(\XX)>u\big)
=
\exp\!\big(-t(u)\,(\kappa^\star+o(1))\big), \quad \text{ where } \quad  \kappa^\star
:=
\min\left\{
\varphi^\star(\zz)\;:\;
L^\star\!\big(\qq^\star \zz^{1/\mv\alpha}\big)\ge 1
\right\}  
\end{equation}
and $t(u):=\min_{1\le i\le d}\Lambda_i(u)$.
Thus, $\kappa^\star$ governs the asymptotic exponential decay of the stress event
$\{L(\XX)>u\}$. To characterize the configurations that achieve this decay rate, define the feasible set and its minimizers 
\[
\Gamma
:=
\left\{
\zz\;:\;
L^\star\!\big(\qq^\star \zz^{1/\mv\alpha}\big)\ge 1
\right\} \text{ and }
\quad \mathcal S
:=
\arg\min\bigl\{\varphi^\star(\zz):\zz\in\Gamma\bigr\}
\]
respectively. 
{ Elements of $\mathcal S$ describe the dominant tail configurations at the
large-deviations scale, and therefore form the benchmark set around which
plausible stress scenarios concentrate}. Under strict convexity of $\varphi^\star$, this minimizer
is unique, though the analysis requires only that $\mathcal S$ be nonempty and closed. To formalize concentration near $\mathcal S$, { for $\delta>0$ let
\[
\mathcal M_{\delta,u}
:=
\qq\bigl(t(u)\,\mathcal S_\delta^c\bigr) = \{\qq(t(u)\yy): \yy\in \mathcal S_\delta^c \}
\]
} where $\mathcal S_\delta $ is the set of points at a distance of at most $\delta$ from $\mathcal S$. { Thus, $\mathcal M_{\delta,u}$ collects scenarios whose tail-standardized
coordinates lie outside the $\delta$-neighborhood of the dominant set
$\mathcal S$.}
\begin{proposition}[\textbf{Stress scenarios via large deviations}]
\label{prop:ldp_stress_perspective}
Let Assumptions~\ref{assume:ldp}-\ref{assume:loss} hold. Then for any $\delta>0$,
\[
\limsup_{u\to\infty}
[t(u)]^{-1}
\log
\mathbb P\!\left(
\XX\in\mathcal M_{\delta,u}\,\middle|\,L(\XX)>u
\right)
<0.
\]
\end{proposition}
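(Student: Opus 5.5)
The plan is to write the conditional probability as a ratio and bound the two exponential rates separately:
\[
\mathbb P\!\left(\XX\in\mathcal M_{\delta,u}\mid L(\XX)>u\right)=\frac{\mathbb P\bigl(\XX\in\mathcal M_{\delta,u},\,L(\XX)>u\bigr)}{\mathbb P(L(\XX)>u)}.
\]
By \eqref{eqn:LDP_asymp}, the denominator is $\exp(-t(u)(\kappa^\star+o(1)))$. It therefore suffices to show that
\[
\limsup_{u\to\infty}[t(u)]^{-1}\log\mathbb P\bigl(\XX\in\mathcal M_{\delta,u},L(\XX)>u\bigr)\le-\kappa_\delta
\quad\text{for some } \kappa_\delta>\kappa^\star .
\]
The natural candidate is $\kappa_\delta:=\inf\{\varphi^\star(\zz):\zz\in\Gamma\cap\mathcal S_\delta^c\}$, possibly after replacing $\Gamma$ by a slightly enlarged closed set $\Gamma^\epsilon$ and $\delta$ by $\delta/2$.

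We first pass to the standardized scale. Put $\ZZ_u:=\YY/t(u)$. The event $\{\XX\in\mathcal M_{\delta,u}\}$ is contained in $\{\ZZ_u\in\mathcal S_\delta^c\}$, since $\qq$ is the left inverse of $\mv\Lambda$; boundary and atom issues are handled by working with the closed set $\overline{\mathcal S_{\delta/2}^c}$. For the loss event, write $\XX=\qq(t(u)\ZZ_u)$. Since $\Lambda_i\in\RV(\alpha_i)$, the inverse satisfies $q_i\in\RV(1/\alpha_i)$, so
\[
\qq(t\zz_t)/\|\qq(t)\|_\infty\to\qq^\star\zz^{1/\mv\alpha}
\]
uniformly on compacts away from the axes; the degenerate coordinates are handled by monotonicity. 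Combining this with Assumption~\ref{assume:loss} and the fact that $u$ is comparable to $\|\qq(t(u))\|_\infty$ (from the definition of $t(u)$), we get that for every $\epsilon>0$ and all large $u$,
\[
\{L(\XX)>u\}\cap\{\|\ZZ_u\|\le M\}\subseteq\{\ZZ_u\in\Gamma^\epsilon\},
\qquad
\Gamma^\epsilon:=\{\zz:L^\star(\qq^\star\zz^{1/\mv\alpha})\ge1-\epsilon\}.
\]
This is the same reduction used to derive \eqref{eqn:LDP_asymp} in \cite{deo2025achieving}, and we will borrow it from there.

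Next comes the upper bound. From Assumption~\ref{assume:ldp}, a standard Laplace-type argument on the density $f_{\YY}(t\zz)\,t^d$ shows that $\ZZ_u\in\LDP(t(u),\varphi^\star)$. The polynomial factor $p$ is subexponential, and $\varphi^\star$ is $1$-homogeneous and continuous, which gives exponential tightness (it grows linearly, so large $\|\zz\|$ contributes rate at least $cM$). The upper bound on the closed set $\Gamma^\epsilon\cap\overline{\mathcal S_{\delta/2}^c}\cap\{\|\zz\|\le M\}$, together with tightness to discard $\{\|\ZZ_u\|>M\}$ for large $M$, then gives a numerator rate of at least
\[
\kappa_{\delta,\epsilon}:=\inf\{\varphi^\star(\zz):\zz\in\Gamma^\epsilon\cap\overline{\mathcal S_{\delta/2}^c}\}.
\]

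The main obstacle is showing that $\kappa_{\delta,\epsilon}>\kappa^\star$ for some small $\epsilon$. This is a strict-separation argument by compactness. Suppose not. Then there exist $\epsilon_n\downarrow0$ and $\zz_n\in\Gamma^{\epsilon_n}\cap\overline{\mathcal S_{\delta/2}^c}$ with $\varphi^\star(\zz_n)\to\le\kappa^\star$. Because $\varphi^\star$ has compact sublevel sets (homogeneity with $\varphi^\star>0$ off the origin), $\zz_n$ has a limit point $\zz$. The condition on $L^\star$ passes to the limit by continuity of $L^\star$ (which follows from the continuous-convergence form of Assumption~\ref{assume:loss}) and of $\zz\mapsto\zz^{1/\mv\alpha}$. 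Hence $\zz\in\Gamma$, and by lower semicontinuity $\varphi^\star(\zz)\le\kappa^\star$, so $\zz\in\mathcal S$. But $\zz\in\overline{\mathcal S_{\delta/2}^c}$ places it at distance at least $\delta/2$ from $\mathcal S$, a contradiction. We therefore get $\kappa_{\delta,\epsilon}>\kappa^\star$, and subtracting the denominator rate yields a $\limsup$ of at most $\kappa^\star-\kappa_{\delta,\epsilon}<0$.
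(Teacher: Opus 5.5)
Your proposal is correct and follows the paper's proof. It uses the same ratio decomposition, the denominator rate from \eqref{eqn:LDP_asymp}, the LDP upper bound for the numerator, and the compactness and lower-semicontinuity argument showing that the constrained infimum away from $\mathcal S$ strictly exceeds $\kappa^\star$. The one difference is that you spell out, via the enlarged closed set $\Gamma^\epsilon$, exponential tightness, and the $\epsilon_n\downarrow 0$ separation argument, how the $u$-dependent event $\{L(\XX)>u\}$ is reduced to a fixed closed set. The paper passes over this step by applying the upper bound directly to $F_\delta\cap\Gamma$, so your version is the more careful rendering of the same argument.
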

\noindent 
{Proposition~\ref{prop:ldp_stress_perspective} provides the large-deviations justification for plausibility. Since $\XX\in\mathcal M_{\delta,u}$ is equivalent to $\YY_u:=[t(u)]^{-1}\mv\Lambda(\XX)\notin\mathcal S_\delta$, the proposition implies that, conditional on the stress event $\{L(\XX)>u\}$, the tail-standardized vector $\YY_u$ concentrates near the dominant set $\mathcal S$. In particular, for any fixed $\delta>0$, $P\left(\YY_u\notin\mathcal S_\delta \mid L(\XX)>u\right)$ is exponentially small at speed $t(u)$. This motivates the interpretation that a scenario $\zz$ is plausible at level $u$ when its standardized tail coordinates $[t(u)]^{-1}\Lambda(\zz)$ lie near $\mathcal S$. Scenarios away from this set carry negligible mass in the conditional stress regime.}
\begin{definition}[\textbf{Asymptotically representative stress distribution}]
\label{def:rep_stress_dist}
A random vector $\tilde\XX$ is an \emph{asymptotically representative stress distribution} for
$L(\cdot)$ if $\mathbb P(L(\tilde\XX)>u)>0$ for all sufficiently large $u$ and, for every
$\delta>0$,
\[
\limsup_{u\to\infty}
[t(u)]^{-1}
\log
\mathbb P\!\left(
\tilde\XX\in\mathcal M_{\delta,u}\,\middle|\,L(\tilde\XX)>u
\right)
<0.
\]
\end{definition}
Definition~\ref{def:rep_stress_dist} requires that, under stress, $\tilde\XX$ reproduce the same
tail geometry as the baseline model after marginal standardization. Because $\mathcal M_{\delta,u}$ is constructed from the baseline transforms $\mv\Lambda$ and $\qq$, the criterion compares stress distributions using the baseline tail scaling and the baseline large-deviations geometry. Both features are essential: even a distribution with correctly specified
marginals will fail to be representative if its exceedances are routed through the wrong
large-deviations configurations.
\subsection{A Self-Structuring Transform for Representative Stress Generation}
We now construct a transformed distribution designed to meet this benchmark while making target exceedances more frequent. For $s>1$, define the self-structuring map $\mv T_s:\mathbb R_+^d\to\mathbb R_+^d$ componentwise by
\begin{equation}\label{eqn:T_s}
T_{s,j}(\zz)
:=
z_j\,s^{\kappa_j(\zz)},
\qquad
\kappa_j(\zz)
:=
\frac{\log(1+z_j)}{\max_{1\le i\le d}\log(1+z_i)},
\qquad j=1,\dots,d.
\end{equation}

{ Self-structuring maps were introduced in \cite{deo2025achieving} for loss-agnostic importance sampling, where they define a proposal law for estimating $P(L(\XX)>u)$. Here, we use the same tail-extrapolation mechanism for a different purpose: transforming observed samples into an empirical distribution of representative scenarios for the conditional stress regime. The normalization in $\kappa_j(\zz)$ captures relative tail weights, allowing
$\mv T_s$ to make stress scenarios more frequent in the transformed sample
while preserving the large-deviations geometry of the most likely stress
configurations.} To analyze this effect,
 define
\[
\YY_u:=[t(u)]^{-1}\mv\Lambda(\XX),
\qquad
\YY_{s,u}:=[t(u)]^{-1}\mv\Lambda\!\bigl(\mv T_s(\qq(t(u)\YY_u))\bigr).
\]
\begin{proposition}\label{prop:ldp_transform}
Let $\YY_{s,u}$ be defined as above. Then $\YY_{s,u}\in \LDP(t,\varphi_s^\star)$, where
\[
\varphi_s^\star(\zz)=s^{-\alpha_\star}\varphi^\star(\zz).
\]
\end{proposition}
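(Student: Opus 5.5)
The plan is to realize $\YY_{s,u}$ as a deterministic, asymptotically continuous image of $\YY_u$ and then apply an extended contraction principle (Dembo--Zeitouni, Thm.~4.2.23). Write $\YY_{s,u}=F_u(\YY_u)$ with
\[
F_u(\yy):=[t(u)]^{-1}\mv\Lambda\bigl(\mv T_s(\qq(t(u)\yy))\bigr).
\]
The argument has four steps.

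\textbf{Step 1 (LDP for $\YY_u$).} I would first show $\YY_u\in\LDP(t,\varphi^\star)$. Since $\YY_u=\YY/t$, its density is $t^d p(t\yy)\exp(-\varphi(t\yy))$. By \eqref{eqn:ldp_assume}, $t^{-1}\log$ of this density converges to $-\varphi^\star(\yy)$ along every sequence $\yy_t\to\yy$. The prefactors $p$ and $t^d$ are subexponential, since $t^{-\varsigma}\log p\to0$. Standard local-to-global arguments (Laplace-type bounds on small balls) then give the LDP. I would also show that $\varphi^\star$ is a good rate function and that $\YY_u$ is exponentially tight. The limit definition gives positive homogeneity, $\varphi^\star(c\yy)=c\,\varphi^\star(\yy)$, and $\varphi^\star>0$ off $\mathbf 0$ (it is the limit of a density exponent). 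This also follows from the framework of \cite{deValk} and Theorem~2 of \cite{deo2025achieving}.

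\textbf{Step 2 (limit of $F_u$).} Fix $\yy$ with $y_j>0$ and set $x_j=q_j(t y_j)$. Since $\Lambda_j\in\RV(\alpha_j)$, we have $q_j\in\RV(1/\alpha_j)$, so
\[
\log(1+x_j)=\Bigl(\tfrac1{\alpha_j}+o(1)\Bigr)\log t .
\]
Hence, if the coordinate achieving $\alpha_\star$ is positive,
\[
\kappa_j\to\frac{1/\alpha_j}{\max_i 1/\alpha_i}=\frac{\alpha_\star}{\alpha_j},
\qquad\text{and}\qquad
\Lambda_j\bigl(x_j s^{\kappa_j}\bigr)\big/\Lambda_j(x_j)\to \bigl(s^{\alpha_\star/\alpha_j}\bigr)^{\alpha_j}=s^{\alpha_\star}.
\]
The second limit uses uniform convergence for regularly varying functions. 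Since $\Lambda_j(q_j(ty_j))/t\to y_j$, this gives $F_u(\yy_u)\to s^{\alpha_\star}\yy=:F(\yy)$ along $\yy_u\to\yy$. Coordinates with $y_j=0$ are handled by the monotonicity bound $x_j\le T_{s,j}(\xx)\le s\,x_j$. This bound gives
\[
[t]^{-1}\Lambda_j(T_{s,j})\le [t]^{-1}\Lambda_j(s\,q_j(t y_{j,u}))\to0 .
\]

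\textbf{Step 3 (contraction).} Given convergence $F_u(\yy_u)\to F(\yy)$ along convergent sequences on the relevant set, plus exponential tightness, the extended contraction principle yields $\YY_{s,u}\in\LDP(t,I)$ with
\[
I(\zz)=\inf\{\varphi^\star(\yy):s^{\alpha_\star}\yy=\zz\}=\varphi^\star(s^{-\alpha_\star}\zz)=s^{-\alpha_\star}\varphi^\star(\zz)
\]
by homogeneity. This is $\varphi_s^\star$.

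\textbf{Step 4 (main obstacle).} The hardest point is uniformity in Step 2 near faces where the coordinates attaining $\alpha_\star$ are small or zero. There the denominator of $\kappa_j$ is governed by $\min\{\alpha_i:y_i>0\}$ rather than $\alpha_\star$, so the pointwise limit map is $\yy\mapsto s^{\alpha(\yy)}\yy$, which is discontinuous. I would try to show that such faces are negligible at speed $t$. Concretely, I would bound $F_u(\yy)\le s^{\alpha_{\max}}\yy$ uniformly, which contracts any exceptional region toward where $\varphi^\star$ is continuous. Alternatively, I would use the equal-index case $\alpha_i\equiv\alpha_\star$, in which the exponent is trivially constant, and argue via the lower bound on open sets intersecting $\{y_{i_\star}>0\}$. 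If this fails in general, the statement should be read with $\alpha_\star$ replaced by the effective index on the support of the relevant minimizers. In particular, it holds exactly wherever $\varphi^\star$-relevant sets lie in $\{y_{i_\star}>0\}$.
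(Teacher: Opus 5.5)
Your Steps 1--3 follow the paper's route: the LDP for $\YY_u$ (the paper takes it from Theorem~1 of \cite{deo2025achieving}), convergence of $\tilde{\mv T}_{s,u}$ to $\zz\mapsto s^{\alpha_\star}\zz$, the approximate contraction principle, and homogeneity of $\varphi^\star$. The paper handles your Step~4 by asserting, via Corollary~1 of \cite{deo2025achieving}, that this convergence is uniform on compacts. You leave Step~4 open, so the proposal is not a proof.

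However, the obstacle you isolate is real, and your first proposed fix cannot work: the exceptional face is not negligible at speed $t$. Let $J_\star:=\{i:\alpha_i=\alpha_\star\}$ and $\mathcal F:=\{\yy\in\mathbb R_+^d:y_i=0\ \text{for all } i\in J_\star\}$. Take $d=2$ and independent $\xi_i$ with $\mathbb P(\xi_i>x)=e^{-x^{\alpha_i}}$, $\alpha_1<\alpha_2$. Then $\YY$ has i.i.d.\ standard exponential coordinates, Assumption~\ref{assume:ldp} holds with $p\equiv 1$ and $\varphi^\star(\yy)=y_1+y_2$, and $t(u)=u^{\alpha_1}$. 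On $\{\xi_1\le 1<\xi_2\}$ the maximum in the denominator of $\kappa_j$ is attained at coordinate $2$, so $T_{s,2}(\XX)=s\xi_2$. Hence, for large $u$,
\[
\mathbb P\bigl((\YY_{s,u})_2\ge 1\bigr)\ \ge\ \mathbb P(\xi_1\le 1)\,\mathbb P\bigl(s^{\alpha_2}\xi_2^{\alpha_2}\ge t(u)\bigr)=(1-e^{-1})\,e^{-s^{-\alpha_2}t(u)}.
\]
The claimed upper bound for the closed set $\{\zz:z_2\ge 1\}$ is $-\inf\{s^{-\alpha_1}(z_1+z_2):z_2\ge1\}=-s^{-\alpha_1}<-s^{-\alpha_2}$. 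So the proposition is false as stated once the indices differ.

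The paper's proof has exactly the hole you found. At $\zz=(0,1)$ one has $[\qq(t)]^{-1}\mv T_s(\qq(t\zz))=(0,s)$, not the claimed limit $(0,s^{\alpha_1/\alpha_2})$, and such points lie in every level set $\{\varphi^\star\le\gamma\}$ with $\gamma\ge 1$. Your bound $F_u(\yy)\lesssim s^{\alpha_{\max}}\yy$ only yields the weaker rate $s^{-\alpha_{\max}}\varphi^\star$, and the example shows this weaker rate is what actually governs $\mathcal F$.

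Your fallback reading is the right repair.

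\textbf{Equal indices.} If $\alpha_1=\cdots=\alpha_d$, then $\mathcal F=\{\mathbf 0\}$. Along any $\yy_u\to\yy$, each coordinate with $y_j>0$ has $\log(1+q_j(t y_{u,j}))\sim\alpha_\star^{-1}\log t$, and no coordinate exceeds this. So $\kappa_j\to 1$ wherever $y_j>0$. Together with $x_j\le T_{s,j}\le s x_j$, this gives continuous convergence on all of $\mathbb R_+^d$ to the continuous map $\yy\mapsto s^{\alpha_\star}\yy$. Your Step~3 then proves the statement exactly. This is the setting of the experiments, where $\alpha_i\equiv 1.2$ or $\alpha_i\equiv 0.9$.

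\textbf{Unequal indices.} Your Step~2 gives continuous convergence off $\mathcal F$. The bound $T_{s,j}\le s x_j$ keeps the $J_\star$-coordinates of $\YY_{s,u}$ at most about $s^{\alpha_\star}$ times those of $\YY_u$, so mass near $\mathcal F$ stays near $\mathcal F$. Hence the upper bound with rate $s^{-\alpha_\star}\varphi^\star$ holds for closed sets at positive distance from $\mathcal F$, and the lower bound is unaffected. This restricted form is what Theorem~\ref{thm:representative_transform} and Proposition~\ref{prop:ease_of_learning} actually need: $\qq^\star$ vanishes off $J_\star$, so $\Gamma$ depends only on $\zz_{J_\star}$ and lies at positive distance from $\mathcal F$.
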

Proposition~\ref{prop:ldp_transform} shows that the self-structuring transform preserves the
baseline tail structure up to a positive scalar rescaling of the rate function. Because
$\varphi_s^\star$ is a positive multiple of $\varphi^\star$, the minimizer set remains unchanged.
Thus, the transformed distribution
$ \tilde\XX_s:=\mv T_s(\XX)$
preserves the same most-likely stress configurations as the baseline model while making tail
exceedances occur more frequently.

\begin{theorem}\label{thm:representative_transform}
Let Assumptions~\ref{assume:ldp}-\ref{assume:loss} hold. Then for any $\delta>0$,
\[
\limsup_{u\to\infty}
[t(u)]^{-1}
\log
\mathbb P\!\left(
\tilde\XX_s\in\mathcal M_{\delta,u}\,\middle|\,L(\tilde\XX_s)>u
\right)
<0,
\]
and consequently $\tilde\XX_s$ is an asymptotically representative stress distribution in the
sense of Definition~\ref{def:rep_stress_dist}.
\end{theorem}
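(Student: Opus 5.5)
The plan is to write the conditional probability as a ratio and control the numerator and denominator separately at speed $t(u)$ using Proposition~\ref{prop:ldp_transform}. Note that $\tilde\XX_s\in\mathcal M_{\delta,u}$ is equivalent to $\YY_{s,u}\notin\mathcal S_\delta$, since $\mv\Lambda$ and $\qq$ are mutually inverse on the relevant range; this uses continuity of $F_i$, which holds because $\XX$ has a density. Similarly, $L(\tilde\XX_s)>u$ is equivalent to $u^{-1}L(\qq(t(u)\YY_{s,u}))>1$. Therefore
\[
\mathbb P\bigl(\tilde\XX_s\in\mathcal M_{\delta,u}\mid L(\tilde\XX_s)>u\bigr)
=\frac{\mathbb P\bigl(\YY_{s,u}\in \mathcal S_\delta^c,\ u^{-1}L(\qq(t(u)\YY_{s,u}))>1\bigr)}{\mathbb P\bigl(u^{-1}L(\qq(t(u)\YY_{s,u}))>1\bigr)} .
\]
The target is to show that the denominator is at least $\exp(-t(u)(s^{-\alpha_\star}\kappa^\star+\epsilon))$ and the numerator is at most $\exp(-t(u)(s^{-\alpha_\star}\kappa^\star+\eta))$ for some $\eta>\epsilon>0$.

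\textbf{Step 1 (converting the loss event).} I would use the regular variation $\Lambda_i\in\RV(\alpha_i)$ to show $q_i(t y)/u\to q_i^\star y^{1/\alpha_i}$ locally uniformly in $y$, after normalising by $\|\qq(t(u))\|_\infty\sim u$. Combined with Assumption~\ref{assume:loss}, this gives $u^{-1}L(\qq(t(u)\yy_u))\to L^\star(\qq^\star\yy^{1/\mv\alpha})$ along every sequence $\yy_u\to\yy$. This is exactly the continuous-convergence argument underlying \eqref{eqn:LDP_asymp} (Theorem~2 of \cite{deo2025achieving}), and I will reuse it rather than redo it. It yields two inclusions for large $u$, holding on compacts; exponential tightness from the LDP of Proposition~\ref{prop:ldp_transform} handles the complement of compacts. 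First, the loss event is contained in the closed set $\Gamma^{(\epsilon)}:=\{\zz: L^\star(\qq^\star\zz^{1/\mv\alpha})\ge 1-\epsilon\}$. Second, it contains the open set $\{\zz:L^\star(\qq^\star\zz^{1/\mv\alpha})>1+\epsilon\}$.

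\textbf{Step 2 (the LDP bounds).} The LDP lower bound of Proposition~\ref{prop:ldp_transform} applied to the open set bounds the denominator's exponential rate by $-s^{-\alpha_\star}\inf\{\varphi^\star: L^\star(\cdot)>1+\epsilon\}$. Using homogeneity of $L^\star$ (inherited from Assumption~\ref{assume:loss}) together with the scaling of $\varphi^\star$, this infimum tends to $\kappa^\star$ as $\epsilon\downarrow0$. The upper bound applied to the closed set $\mathcal S_\delta^c\cap \Gamma^{(\epsilon)}$ (taking the open $\delta$-complement's closure if needed) bounds the numerator's rate by $-s^{-\alpha_\star}\inf\{\varphi^\star(\zz):\zz\in \Gamma^{(\epsilon)}\setminus \mathrm{int}\,\mathcal S_\delta\}$.

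\textbf{Step 3 (main obstacle: a strict gap).} The key is to show
\[
\kappa_\delta:=\inf\{\varphi^\star(\zz):\zz\in\Gamma\setminus \mathrm{int}\,\mathcal S_\delta\}>\kappa^\star,
\]
and that the corresponding $\epsilon$-perturbed infimum converges to $\kappa_\delta$ as $\epsilon\downarrow 0$. The first part is the same fact behind Proposition~\ref{prop:ldp_stress_perspective}; I would reuse that argument directly. Lower semicontinuity of $\varphi^\star$ holds as a limit of the kind in \eqref{eqn:ldp_assume}, and goodness (compact sublevel sets, from homogeneity and positivity off the origin) means a minimizing sequence has a limit point in the closed set $\Gamma\setminus\mathrm{int}\,\mathcal S_\delta$. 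That limit point would attain $\kappa^\star$ and hence lie in $\mathcal S$, a contradiction. For the $\epsilon$-continuity, I would rescale: $\zz\mapsto c\zz$ maps $\Gamma^{(\epsilon)}$ into $\Gamma$ for a constant $c=c(\epsilon)\to1$. By homogeneity, $\varphi^\star$ scales by $c$ while distances to $\mathcal S$ change by $O(c-1)$, so $\mathcal S_\delta$ is replaced by $\mathcal S_{\delta/2}$.

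Choosing $\epsilon$ small enough then gives a limsup of at most $-s^{-\alpha_\star}(\kappa_{\delta/2}-\kappa^\star)/2<0$. Finally, positivity of $\mathbb P(L(\tilde\XX_s)>u)$ follows from the denominator lower bound, which also establishes representativeness in the sense of Definition~\ref{def:rep_stress_dist}.
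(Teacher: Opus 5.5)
Your proposal is correct and follows the paper's route. The paper likewise rewrites the conditional probability in terms of $\YY_{s,u}$ and the event $\{L_u(\YY_{s,u})>1\}$, invokes the LDP of Proposition~\ref{prop:ldp_transform} with rate $s^{-\alpha_\star}\varphi^\star$ (whose minimizer set over $\Gamma$ is still $\mathcal S$), and reuses the strict-gap compactness argument of Proposition~\ref{prop:ldp_stress_perspective}. Your $\epsilon$-sandwich of the $u$-dependent loss event and your rescaling step only make explicit what the paper delegates to Theorem~2 of \cite{deo2025achieving}; the rescaling works because $q_i^\star=0$ whenever $\alpha_i>\alpha_\star$, so $\zz\mapsto L^\star(\qq^\star\zz^{1/\mv\alpha})$ is homogeneous of degree $1/\alpha_\star$.
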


Theorem~\ref{thm:representative_transform} gives the main justification for self-structuring
stress generation: the transformed law makes exceedances less rare without changing the most-likely
stress configurations. 

\subsection{An Empirical Plug-In Generator}
{Given data $\{\XX^1,\ldots,\XX^n\}$, Algorithm~\ref{algo:ssgen} approximates the benchmark conditional stress distribution $\mathrm{Law}(\XX\mid L(\XX)>u)$ from Section~\ref{sec:setup}. Theorem~\ref{thm:representative_transform} provides the large-deviations
justification: the transformed law $\widetilde\XX_s=\mv T_s(\XX)$ preserves the
most-likely configurations while making threshold exceedances more frequent.}
\begin{algorithm}[h]
  \caption{Self-Structuring Stress Scenario Generator}
  \label{algo:ssgen}
  \textbf{Input: } Data samples $\XX_1,\ldots,\XX_n\in\mathbb R_+^d$, loss function $L(\cdot)$, target stress threshold $u$, stretch parameter $s>1$

\noindent\textbf{I) Transformation Step:} For $i=1,\ldots,n$, set $\tilde\XX_i=\mv T_s(\XX_i)$.

\noindent\textbf{II) Construct Stress Samples:} Compute the transformed exceedance index set $\mathcal I_{u,s}
:=
\left\{
i\in[n]:L(\tilde\XX_i)>u
\right\}$,
and form the transformed exceedance pool
\[
\mathcal G_{u,s}
:=
\left\{
\tilde\XX_i:i\in\mathcal I_{u,s}
\right\}.
\]
\vspace{-8
pt}
\noindent\textbf{III) Output Stress Distribution:} Construct the empirical stress distribution
\[
\hat P^{\mathrm{stress}}_{u,s}
:=
{|\mathcal I_{u,s}|}^{-1}
\sum_{i\in\mathcal I_{u,s}}
\delta_{\tilde\XX_i},
\]

\noindent\textbf{Output: } Stress scenarios $\mathcal G_{u,s}$ and empirical stress law $\hat P^{\mathrm{stress}}_{u,s}$.
\end{algorithm}

{The algorithm implements this idea by applying $\mv T_s$ to the baseline samples, retaining transformed exceedances, and forming their empirical distribution. Thus, its output is a finite-sample proxy, not an exact conditional sampler. Apart from evaluating $L$, the procedure requires only one componentwise transformation and one exceedance check per input sample, making it computationally light.}

\begin{illustration}\em
We illustrate Algorithm~\ref{algo:ssgen} on a reinsurance network with $d=2$ objects and $q=3$ agents. Figure~\ref{fig:GC_true} shows that a Gaussian copula, even with matched marginals and covariance, misses the true tail dependence structure. A naive empirical sampler is omitted because realistic baseline samples typically yield no exceedances at these quantiles. Figure~\ref{fig:SS_true} shows that the empirical stress law $\hat P_{u,s}^{\mathrm{stress}}$ from Algorithm~\ref{algo:ssgen} more closely matches the true stressed distribution.
\begin{figure}[htbp]
    \centering
    \begin{subfigure}[t]{0.3\textwidth}
        \centering
\includegraphics[width=\linewidth]{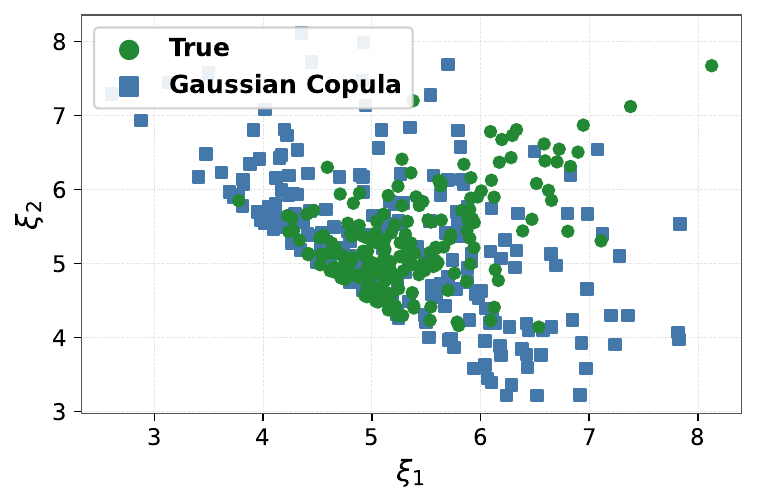}
        \caption{Gaussian copula generator}
        \label{fig:GC_true}
    \end{subfigure}
    \hspace{0.015\textwidth}
    \begin{subfigure}[t]{0.3\textwidth}
        \centering
        \includegraphics[width=\linewidth]{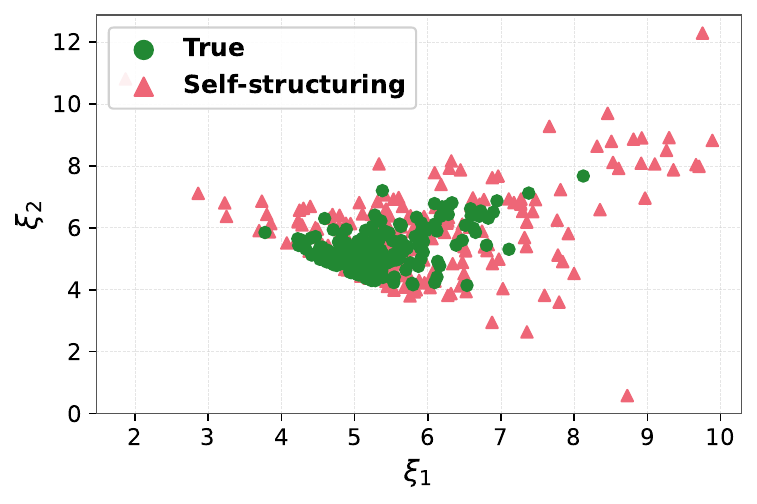}
        \caption{Samples from Algorithm~\ref{algo:ssgen}}
        \label{fig:SS_true}
    \end{subfigure}
   \caption{Conditional distribution of risk factors given that the loss exceeds a high threshold. Algorithm~\ref{algo:ssgen} produces samples that more closely match the true stressed distribution than the Gaussian-copula benchmark, while remaining implementable at stress levels where a naive empirical exceedance resampler would produce no samples.}
    \label{fig:comparison}
\end{figure}
\end{illustration}

\noindent \textbf{Sample Complexity Reduction using Self-Structuring Transformations: }
A representative stress distribution is useful only if it can be learned from finite data. For
practically relevant thresholds $u$, however, the exceedance probability $p_u:=\mathbb P(L(\XX)>u)$
is very small. Naive sampling from the baseline model yields only
$np_u$ exceedances on average. Thus, observing even a modest number of stress scenarios requires a
sample size of order $p_u^{-1}$. The next proposition shows that the self-structuring transform
$\mv T_s$ mitigates this difficulty by making exceedances occur at a strictly larger rate on the
logarithmic scale.

\begin{proposition}\label{prop:ease_of_learning}
Let $\tilde\XX_s=\mv T_s(\XX)$ and let $\XX$ satisfy Assumptions~\ref{assume:ldp}--\ref{assume:loss}. Then
\[
\frac{\log \mathbb P(L(\XX)>u)}{\log \mathbb P(L(\tilde\XX_s)>u)}
\longrightarrow s^{\alpha_\star},
\qquad u\to\infty.
\]
\end{proposition}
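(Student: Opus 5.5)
The plan is to combine the baseline asymptotic \eqref{eqn:LDP_asymp} with a matching asymptotic for the transformed vector. The latter comes from the LDP in Proposition~\ref{prop:ldp_transform}, run through the same rate-minimization argument used for \eqref{eqn:LDP_asymp}.

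\textbf{Step 1: rewrite the transformed event.} By definition of $\YY_{s,u}$ we have $\tilde\XX_s=\mv T_s(\XX)=\qq(t(u)\YY_{s,u})$, up to the usual left-inverse issues at atoms, which are absent since $\XX$ has a density. Hence
\[
\{L(\tilde\XX_s)>u\}=\{\YY_{s,u}\in A_u\},\qquad A_u:=\{\yy: u^{-1}L(\qq(t(u)\yy))>1\}.
\]
The argument behind Theorem~2 of \cite{deo2025achieving}, which produces \eqref{eqn:LDP_asymp}, uses two inputs. The first is an LDP for $\YY_u$ at speed $t(u)$ with rate $\varphi^\star$. The second is the set convergence $A_u\to\Gamma$, in the sense that $\liminf A_u\supseteq\operatorname{int}\Gamma$ and $\limsup A_u\subseteq\Gamma$ locally uniformly. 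The second input follows from Assumption~\ref{assume:loss} together with $\Lambda_i\in\RV(\alpha_i)$. The latter gives $\qq(t(u)\yy_u)/u\to\qq^\star\yy^{1/\mv\alpha}$, and this uses $u$ comparable to $\|\qq(t(u))\|_\infty$, which holds since $t(u)=\min_i\Lambda_i(u)$. Since $A_u$ depends only on $L$, $\qq$ and $t$, and not on the law of the vector, the same argument applies verbatim with $\YY_u$ replaced by $\YY_{s,u}$ and $\varphi^\star$ replaced by $\varphi^\star_s=s^{-\alpha_\star}\varphi^\star$.

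\textbf{Step 2: upper and lower bounds.} For the upper bound, for any $\epsilon>0$ and all large $u$ we have $A_u\subseteq\Gamma^{(\epsilon)}:=\{\zz: L^\star(\qq^\star\zz^{1/\mv\alpha})\ge 1-\epsilon\}$, which is closed. The LDP upper bound then gives
\[
\limsup_{u\to\infty} t(u)^{-1}\log P(L(\tilde\XX_s)>u)\le -s^{-\alpha_\star}\inf_{\Gamma^{(\epsilon)}}\varphi^\star .
\]
Homogeneity lets $\epsilon\downarrow0$: $L^\star$ is $1$-homogeneous, and $\varphi^\star$ is $1$-homogeneous as the limit in \eqref{eqn:ldp_assume}. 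Scaling $\zz\mapsto c\zz$ maps $\Gamma^{(\epsilon)}$ into $\Gamma$ with $c\to1$, so the infimum converges to $\kappa^\star$. For the lower bound, take an open set $\{L^\star(\qq^\star\zz^{1/\mv\alpha})>1+\epsilon\}$, which is eventually contained in $A_u$, and use the same scaling argument. Together these give
\[
P(L(\tilde\XX_s)>u)=\exp\bigl(-t(u)(s^{-\alpha_\star}\kappa^\star+o(1))\bigr).
\]

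\textbf{Step 3: take the ratio.} Dividing \eqref{eqn:LDP_asymp} by this expression gives
\[
\frac{\log P(L(\XX)>u)}{\log P(L(\tilde\XX_s)>u)}=\frac{\kappa^\star+o(1)}{s^{-\alpha_\star}\kappa^\star+o(1)}\to s^{\alpha_\star},
\]
provided $0<\kappa^\star<\infty$. Finiteness follows from non-degeneracy of $L^\star$: $\Gamma$ is nonempty, and $\varphi^\star$ is finite there given the density representation. Positivity follows because $\varphi^\star$ is a good rate function vanishing only at $\mathbf 0$, and $\mathbf 0\notin\Gamma$ since $L^\star(\mathbf 0)=0$.

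\textbf{Main obstacle.} The delicate step is justifying the set convergence $A_u\to\Gamma$ together with the continuity of the infimum over $\Gamma^{(\pm\epsilon)}$, that is, ruling out a discontinuity of $\epsilon\mapsto\inf_{\Gamma^{(\epsilon)}}\varphi^\star$ at $0$. I would handle it via homogeneity as above, and cite the corresponding step in \cite{deo2025achieving} rather than redo it.
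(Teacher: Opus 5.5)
This is the paper's proof. It likewise takes the LDP for $\YY_{s,u}$ from Proposition~\ref{prop:ldp_transform}, writes $\{L(\tilde{\XX}_s)>u\}=\{\YY_{s,u}\in A_u\}$, reruns the proof of Theorem~2 of \cite{deo2025achieving} with $\varphi_s^\star=s^{-\alpha_\star}\varphi^\star$ to get the rate $s^{-\alpha_\star}\kappa^\star$, and divides by \eqref{eqn:LDP_asymp}; your Step~2 just unpacks that rerun.

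Two corrections are needed. First, the inclusion $A_u\subseteq\Gamma^{(\epsilon)}$ holds only inside compact sets: if $\alpha_i>\alpha_\star$ then $q^\star_i=0$, and $A_u$ can contain points running off to infinity in coordinate $i$. Your upper bound therefore also needs exponential tightness of $\YY_{s,u}$, which follows from goodness of $\varphi_s^\star$ on $\mathbb R^d$. Second, non-degeneracy of $L^\star$ does not give $\Gamma\neq\varnothing$. For independent $\xi_1,\xi_2$ with $\Lambda_1(x)=x$, $\Lambda_2(x)=x^2$ and $L(\zz)=z_2$, one has $\qq^\star=(1,0)$ and $\Gamma=\varnothing$, and the ratio then tends to $s^2\neq s^{\alpha_\star}=s$. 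So $0<\kappa^\star<\infty$ cannot be derived from the assumptions; it is a standing hypothesis implicit in \eqref{eqn:LDP_asymp}, which the paper also assumes tacitly.
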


Proposition~\ref{prop:ease_of_learning} implies that
$\mathbb P\bigl(L(\tilde\XX_s)>u\bigr)
\approx
\mathbb P\bigl(L(\XX)>u\bigr)^{s^{-\alpha_\star}}
=
p_u^{\,s^{-\alpha_\star}}$,
so exceedances are exponentially less rare under $\tilde\XX_s$ than under the baseline model.
Consequently, the sample size required to observe a fixed expected number of exceedances is reduced
from order $p_u^{-1}$ to order $p_u^{-s^{-\alpha_\star}}$. Equivalently, from $n$ baseline samples,
the expected number of transformed exceedances at level $u$ is of order
$n\,p_u^{\,s^{-\alpha_\star}}.
$
Thus one can still expect to observe a nondegenerate number of transformed exceedances whenever
$n\,p_u^{\,s^{-\alpha_\star}}\asymp 1$,
that is, for target events with baseline probability
$p_u \asymp n^{-s^{\alpha_\star}}.
$
In particular, writing $\gamma(s):=s^{\alpha_\star}>1$, one can expect to generate a nondegenerate number of representative transformed exceedances from $n$ baseline samples for target events with probabilities of order $n^{-\gamma(s)}$, including regimes in which the raw dataset contains few or no direct exceedances. The sensitivity
of the procedure to the choice of $s$ is briefly examined in Section~\ref{sec:numericals}.
\section{NUMERICAL EXPERIMENTS} \label{sec:numericals}
We evaluate the proposed stress generator along two dimensions: \textit{representativeness}, namely recovery of the baseline conditional stress law, and \textit{data efficiency}, namely the ability to generate informative stress scenarios when target-level exceedances are rare or absent. The baseline risk factors $\XX$ are generated from a $t$-copula 
with Weibull marginals with cdf $F_i(x) = 1-\exp(-x^{\alpha_i}
)$, so as to capture both extremal-tailed dependence and 
flexible marginal tail decay. For each threshold $u$, we construct a high-fidelity reference sample from $P_u:=\mathcal L(\XX\mid L(\XX)>u)$
using a Monte Carlo sample of size $10^6$ from the baseline model.

\subsection{Experiment 1: Efficient Learning Beyond Observed Data}

We first test whether the proposed generator can recover the stressed conditional law from finite baseline data, especially in regimes where direct exceedances are sparse or unavailable. Let $u$ be a high quantile of the baseline loss $L(\XX)$, and consider the two thresholds $u=q_{0.99}$ and $u=q_{0.999}$. For each input sample size $n\in\{250,500,750,1000\}$, we generate an independent baseline sample of size $n$ and construct three stress generators: (i) the proposed extrapolation-based generator, (ii) a Gaussian-copula generator with correctly specified marginals, and (iii) a naive empirical baseline that resamples observed exceedances when available. Each generator is then used to generate a stress sample targeted at the same threshold $u$.  Performance is measured by the Wasserstein-1 distance between the generated and reference exceedance samples, normalized by $u$. Repeating the experiment over $K=1000$ independent samples yields, for each $n$, a median normalized Wasserstein error together with pointwise $95\%$ empirical bands.

We begin with the reinsurance-network loss from Example~\ref{eg:reinsurance} in a system with $d=10$ risky objects and $q=30$ agents. Figure~\ref{fig:reinsurance_all_three} reports the resulting normalized Wasserstein errors. At the more extreme threshold $u=q_{0.999}$, the Gaussian-copula benchmark fails completely: even after generating $10^5$ samples, it produces no exceedances above the target level. By contrast, the extrapolation-based generator continues to recover the stressed law accurately from finite training data. For the same threshold, we also examine sensitivity to the stretch parameter $s$ by fixing $n=1000$, varying $s$ over the indicated range, and again reporting the median normalized Wasserstein error together with pointwise $95\%$ empirical bands. 
{
We choose $s$ using the following heuristic: for a candidate value of
$s$, let $N_s(u):=\sum_{i=1}^n \mathbf 1\{L(T^s(\XX_i))>u\}$
denote the number of transformed exceedances at level $u$. In each experiment,
we select the smallest $s$ for which 
$N_s(u)$ is roughly $50$. Since $N_s(u)$ is monotone in $s$ under monotone losses, the rule can be implemented by a simple grid search over $s$. This choice is fixed before evaluating the
reported error. Figure~\ref{fig:reins_vary_s} nevertheless shows that it lies
close to the interior minimizer, suggesting robustness to moderate
misspecification of $s$.} 
\begin{figure}[htbp]
    \centering
    \begin{subfigure}[t]{0.32\textwidth}
        \centering
        \includegraphics[width=\linewidth]{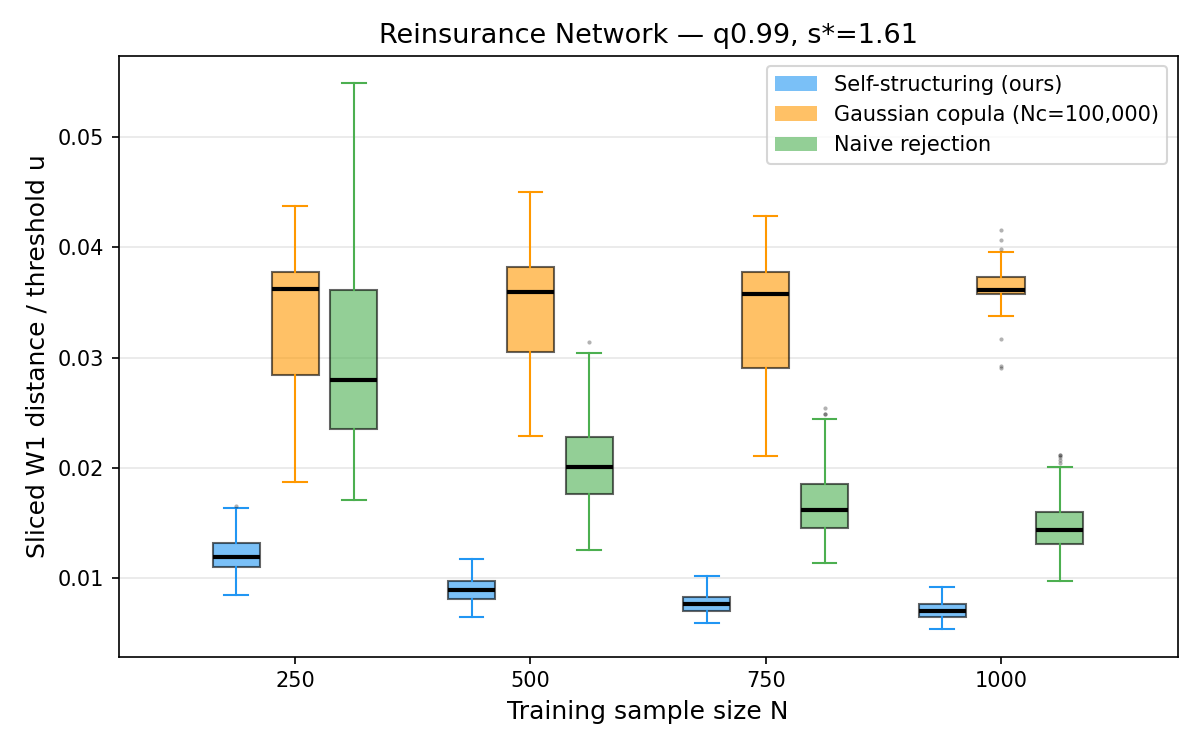}
        \caption{$q=0.99$, $s=1.61$}
        \label{fig:reins_q099}
    \end{subfigure}\hfill
    \begin{subfigure}[t]{0.32\textwidth}
        \centering
        \includegraphics[width=\linewidth]{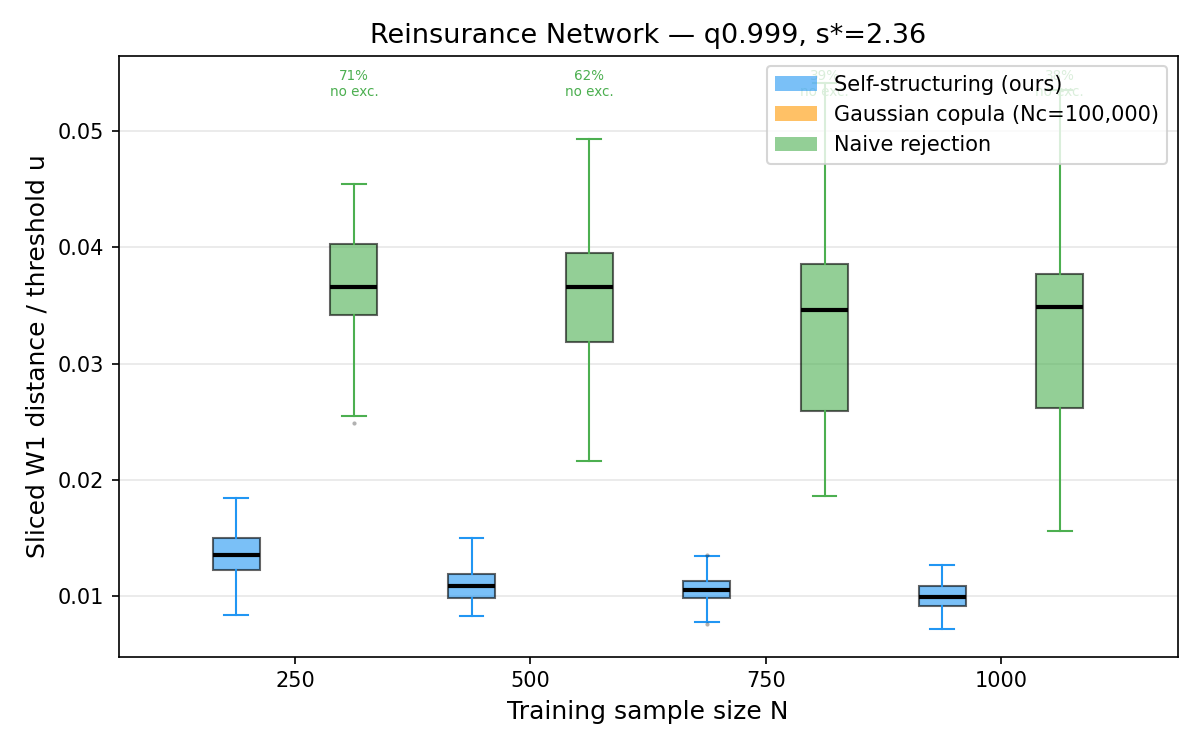}
        \caption{$q=0.999$, $s=2.36$}
        \label{fig:reins_q0999}
    \end{subfigure}\hfill
    \begin{subfigure}[t]{0.32\textwidth}
        \centering
        \includegraphics[width=\linewidth]{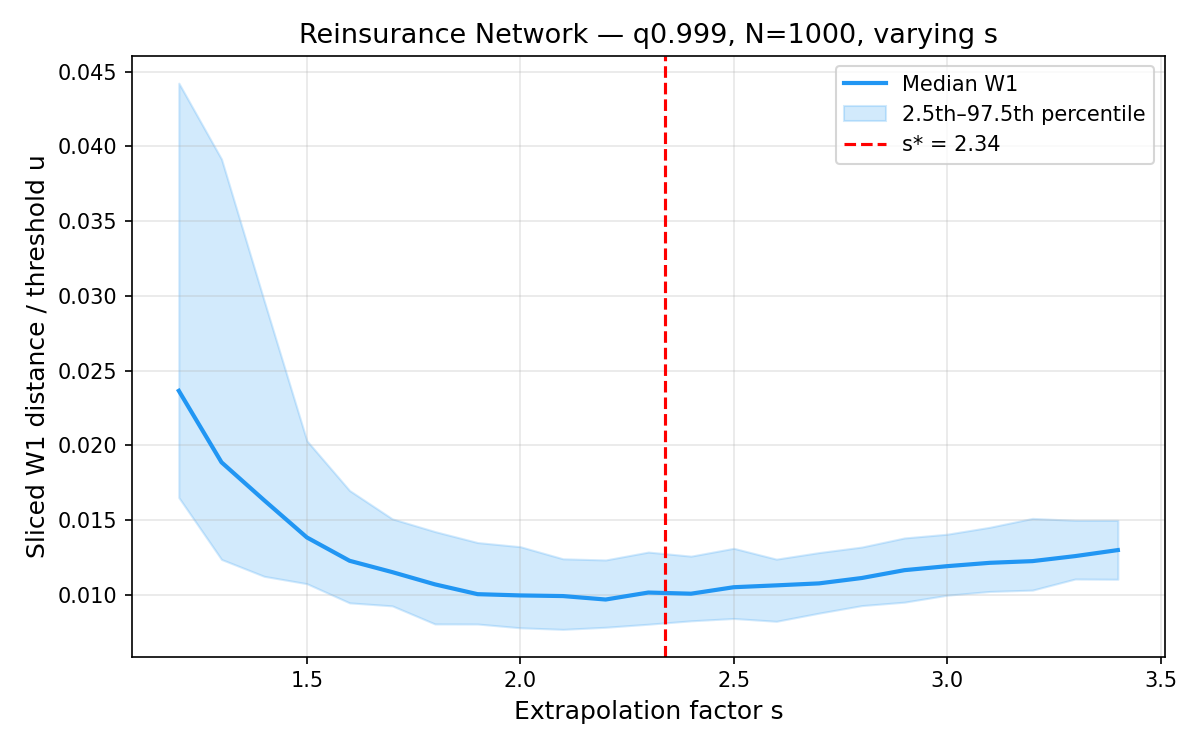}
        \caption{$q=0.999$, varying $s$}
        \label{fig:reins_vary_s}
    \end{subfigure}
   \caption{Normalized Wasserstein-1 error for the reinsurance-network loss at $u=q_{0.99}$ 
(left) and $u=q_{0.999}$ (middle), and sensitivity to the stretch parameter $s$ 
at $u=q_{0.999}$ with $n=1000$ (right). Here the marginals are Weibull with $\alpha_i=1.2$}
    \label{fig:reinsurance_all_three}
\end{figure}

We next examine the proposed method for the clearing network from Example~\ref{eg:clearing} in a system with $m=30$ banks and $d=20$ exogenous risk factors. We consider three stylized network topologies: hub-and-spoke, core-periphery, and fully connected. The topology is specified through the support of the relative liability weights $\lambda_{ji}$. In the fully connected case, $\lambda_{ji}>0$ for all $i\neq j$. In the hub-and-spoke case, one hub bank is connected to every other bank, and no peripheral banks are linked among themselves. In the core-periphery case, banks are partitioned into a core and a periphery, with dense links within the core, no links within the periphery, and links between the core and the periphery. This stylized specification is standard in the interbank tiering literature; see \cite{craig2014interbank}.

Figure~\ref{fig:networks} reports the corresponding normalized Wasserstein errors for the three topologies at $u=q_{0.99}$. Across all three network structures, the extrapolation-based generator more accurately reproduces the reference stress law than either the Gaussian-copula benchmark or the naive empirical baseline. The former misses the relevant tail dependence, whereas the latter is limited by the small number of observed exceedances. Taken together with Figure~\ref{fig:reins_vary_s}, these results show that the proposed generator remains accurate across materially different stress models and is relatively insensitive to the choice of stretch parameter $s$.

\begin{figure}[htbp]
    \centering
    \begin{subfigure}[t]{0.32\textwidth}
        \centering
        \includegraphics[width=\linewidth]{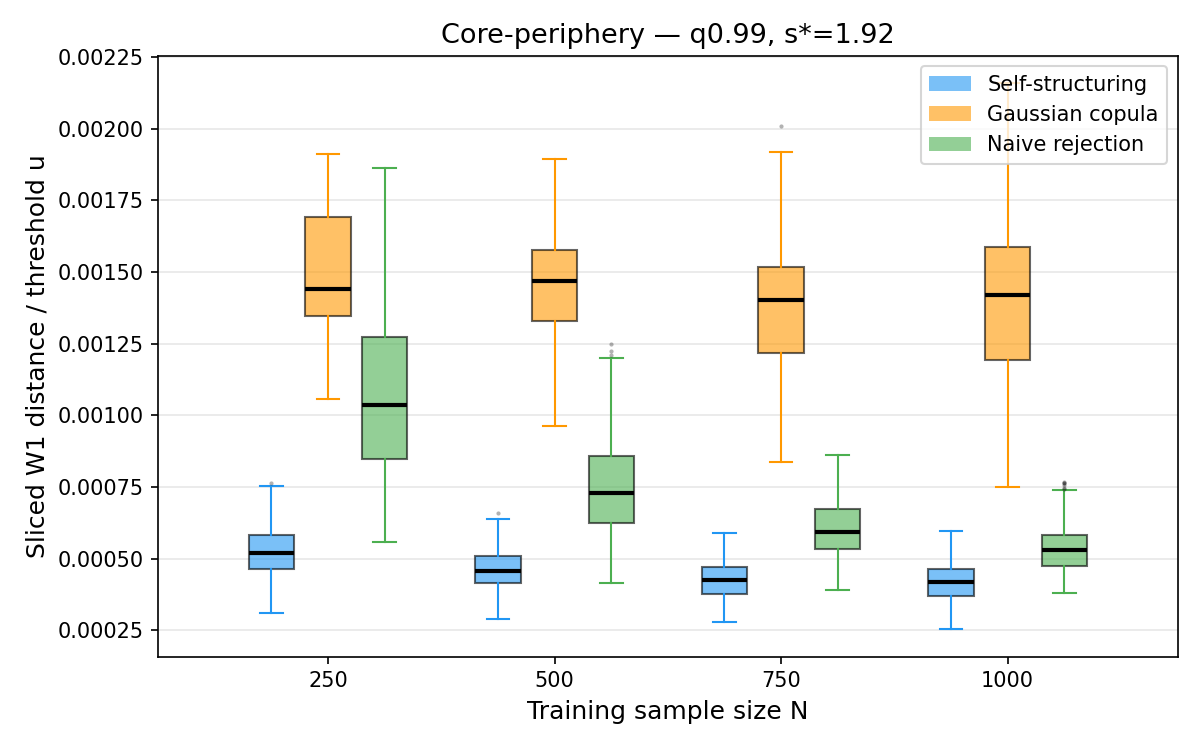}
        \caption{Core--periphery}
        \label{fig:clearing_core_periphery}
    \end{subfigure}\hfill
    \begin{subfigure}[t]{0.32\textwidth}
        \centering
        \includegraphics[width=\linewidth]{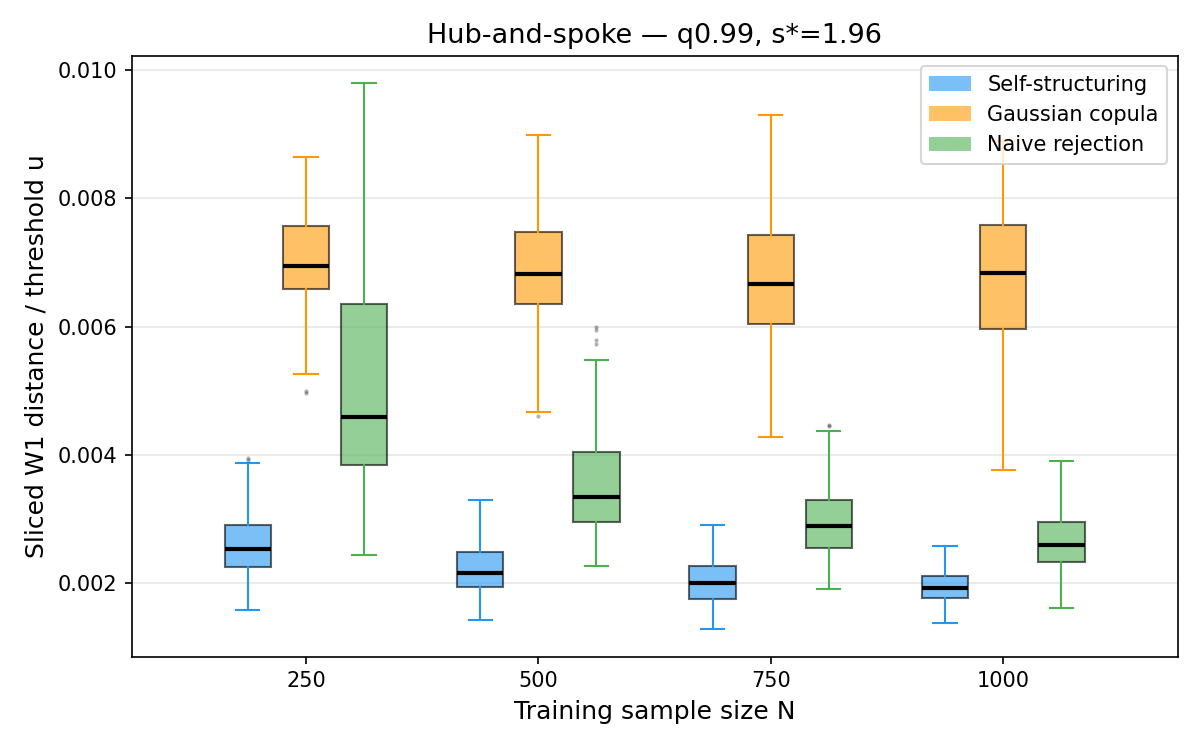}
        \caption{Hub-and-spoke}
        \label{fig:clearing_hub_spoke}
    \end{subfigure}\hfill
    \begin{subfigure}[t]{0.32\textwidth}
        \centering
        \includegraphics[width=\linewidth]{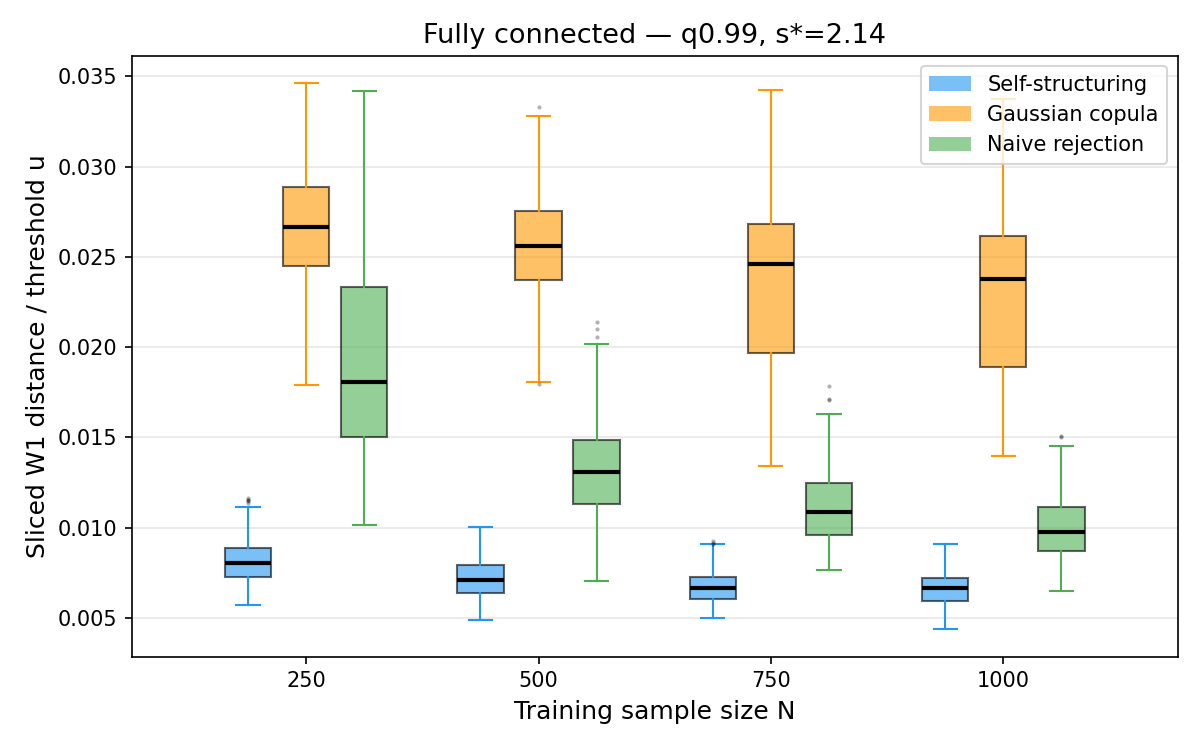}
        \caption{Fully connected}
        \label{fig:clearing_fully_connected}
    \end{subfigure}
   \caption{Normalized Wasserstein-1 error for three stylized 
topologies, all at $u=q_{0.99}$. Box plots summarize median and $95\%$ empirical bands 
across $K=1000$ independent replications for the three generators: self-structuring, 
Gaussian copula, and naive empirical resampler. Here the marginals are Weibull with $\alpha_i = 0.9$.}
    \label{fig:networks}
\end{figure}
\subsection{Experiment 2: Stress Diagnostics Under the Conditional Law}
We next test whether the proposed extrapolation reproduces the internal structure of stress under the conditional law $P_u=\textrm{ Law}(\XX\mid L(\XX)>u)$, beyond the stressed loss law itself.  In both Example~\ref{eg:reinsurance} and Example~\ref{eg:clearing},  the aggregate loss admits a natural institution-level decomposition, 
$L(\XX)=\sum_{i=1}^{n_{\rm inst}} L_i(\XX)$,
where $L_i(\XX)$ denotes the contribution of institution $i$ to total loss and $n_{\rm inst}$ is the number of participating institutions. To distinguish breadth from concentration of distress, we consider the diagnostics
\[
G_{\mathrm{frac}}(\XX):=\frac{1}{n_{\rm inst}}\sum_{i=1}^{n_{\rm inst}} \mathbf 1\{L_i(\XX)>0\},
\qquad
G_{\mathrm{HHI}}(\XX):=\sum_{i=1}^{n_{\rm inst}} \left(\frac{L_i(\XX)}{L(\XX)}\right)^2,
\qquad
G_{\mathrm{loss}}(\XX):=L(\XX).
\]
Here $G_{\mathrm{frac}}(\XX)$ records the fraction of distressed institutions, while $G_{\mathrm{HHI}}(\XX)$ measures the concentration of aggregate loss, ranging from $n_{\rm inst}^{-1}$ under uniform loss sharing to $1$ when a single institution accounts for the entire loss. For the breadth and loss diagnostics, $G_{\mathrm{frac}}$ and $G_{\mathrm{loss}}$, we compare the conditional stressed distributions through their complementary CDFs,
$\bar F_u^G(t):=P(G(\XX)>t\mid L(\XX)>u)$.
For each training sample size $n$, we generate $K=1000$ independent training samples, construct the corresponding stress generator, and estimate the conditional CCDF from the resulting stress sample. We then compare these conditional tail CDFs with the reference conditional CCDF and plot the corresponding pointwise $95\%$ empirical interval as a function of $t$. For the concentration diagnostic $G_{\mathrm{HHI}}$, we instead summarize each generated stress sample by its conditional mean,
\[
\mu_u^{\mathrm{HHI}}
:=
E\!\left[G_{\mathrm{HHI}}(\XX)\mid L(\XX)>u\right],
\]
and examine its distribution across simulation replications. This directly tests whether the generator reproduces the average degree of loss concentration in stressed scenarios.

Since both competing methods produce no exceedances at the level $u=q_{0.999}$, this experiment isolates whether the self-structuring generator preserves these stress diagnostics in a regime where the alternatives cannot generate stress samples at all. Across training sizes, the estimated conditional CCDFs for $G_{\mathrm{frac}}$ and $G_{\mathrm{loss}}$ remain close to their reference curves, with the corresponding bands narrowing as $n$ increases. Likewise, the distribution of the estimated conditional mean of $G_{\mathrm{HHI}}$ contracts toward the reference value as $n$ grows. Taken together, these results show that the proposed generator recovers not only the stressed loss law, but also the breadth, concentration, and severity structure of distress, even at a stress level where the competing methods generate no stress scenarios.
\begin{figure}[htbp]
    \centering
    \begin{subfigure}[t]{0.32\textwidth}
        \centering
        \includegraphics[width=\linewidth]{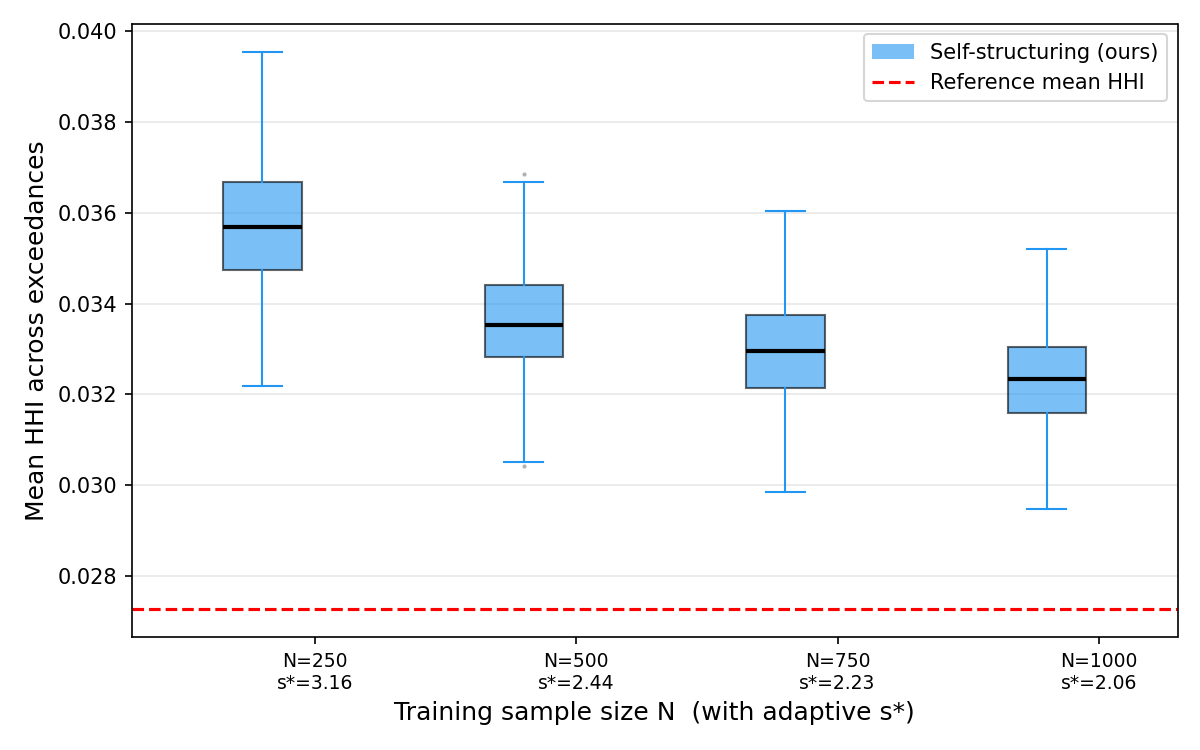}
        \caption{Conditional mean of $G_{\mathrm{HHI}}$}
        \label{fig:HHI_coverage}
    \end{subfigure}\hfill
    \begin{subfigure}[t]{0.32\textwidth}
        \centering
        \includegraphics[width=\linewidth]{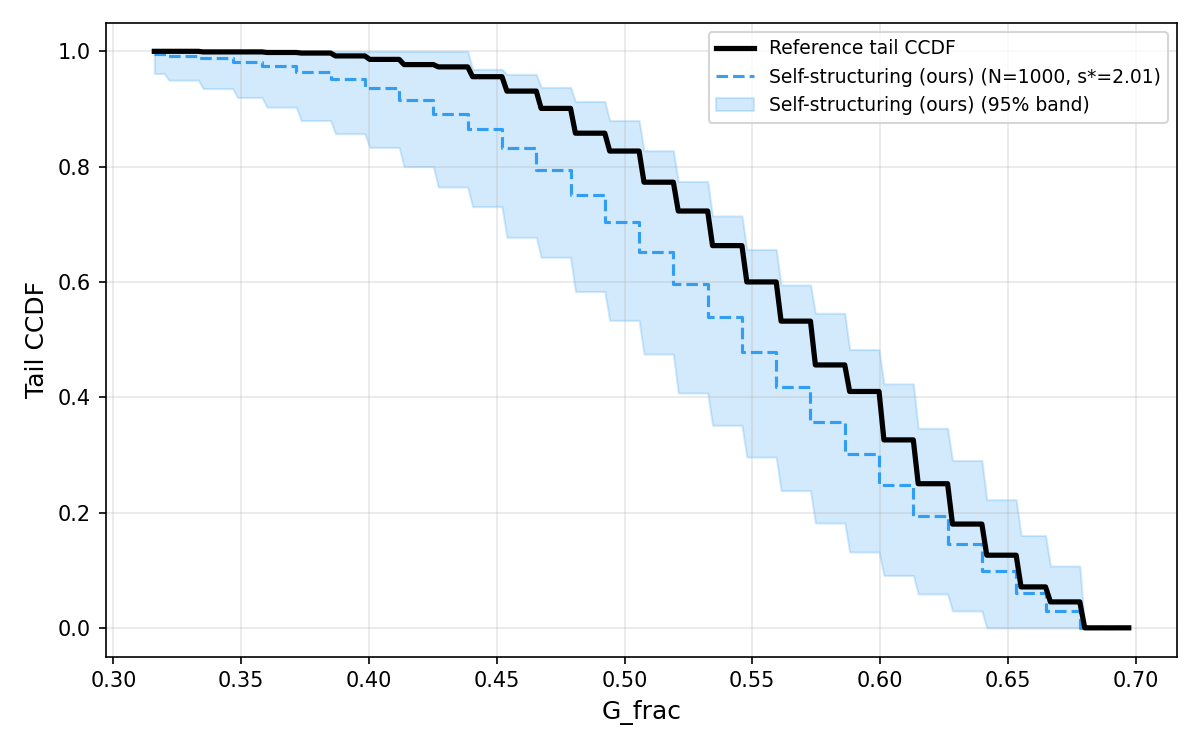}
        \caption{Conditional CCDF of $G_{\mathrm{frac}}$}
        \label{fig:fraction_defaults}
    \end{subfigure}\hfill
    \begin{subfigure}[t]{0.32\textwidth}
        \centering
        \includegraphics[width=\linewidth]{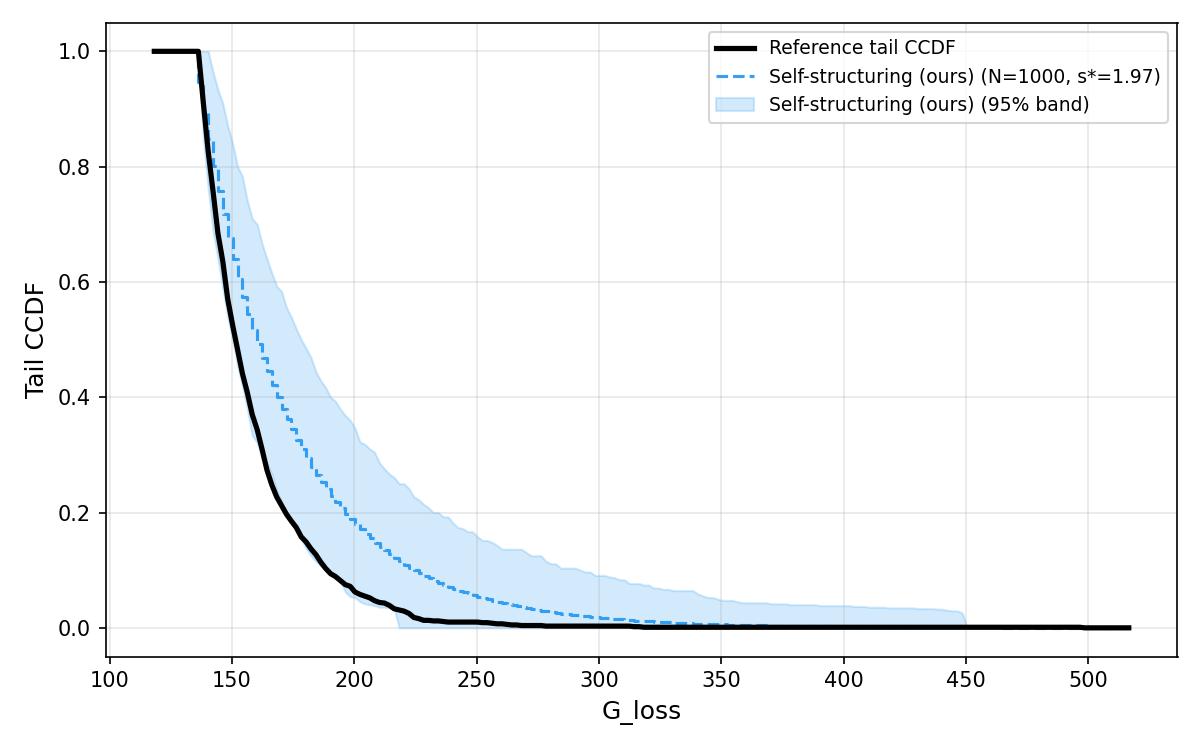}
        \caption{Conditional CCDF of $G_{\mathrm{loss}}$}
        \label{fig:conditonal_loss}
    \end{subfigure}
  \caption{Diagnostic recovery for the reinsurance network at $u=q_{0.999}$. Left: distribution 
of the estimated conditional mean of $G_{\mathrm{HHI}}$ across replications, with the reference 
value marked. Middle and right: pointwise $95\%$ empirical bands for the conditional CCDFs of 
$G_{\mathrm{frac}}$ and $G_{\mathrm{loss}}$, with the reference CCDF overlaid. The value of $s$ is 
chosen so that the self-structuring stress generator produces approximately $50$ samples exceeding $u$.}
    \label{fig:reinsurance_diagnostics}
\end{figure}

\begin{figure}[htbp]
    \centering
    \begin{subfigure}[t]{0.32\textwidth}
        \centering
        \includegraphics[width=\linewidth]{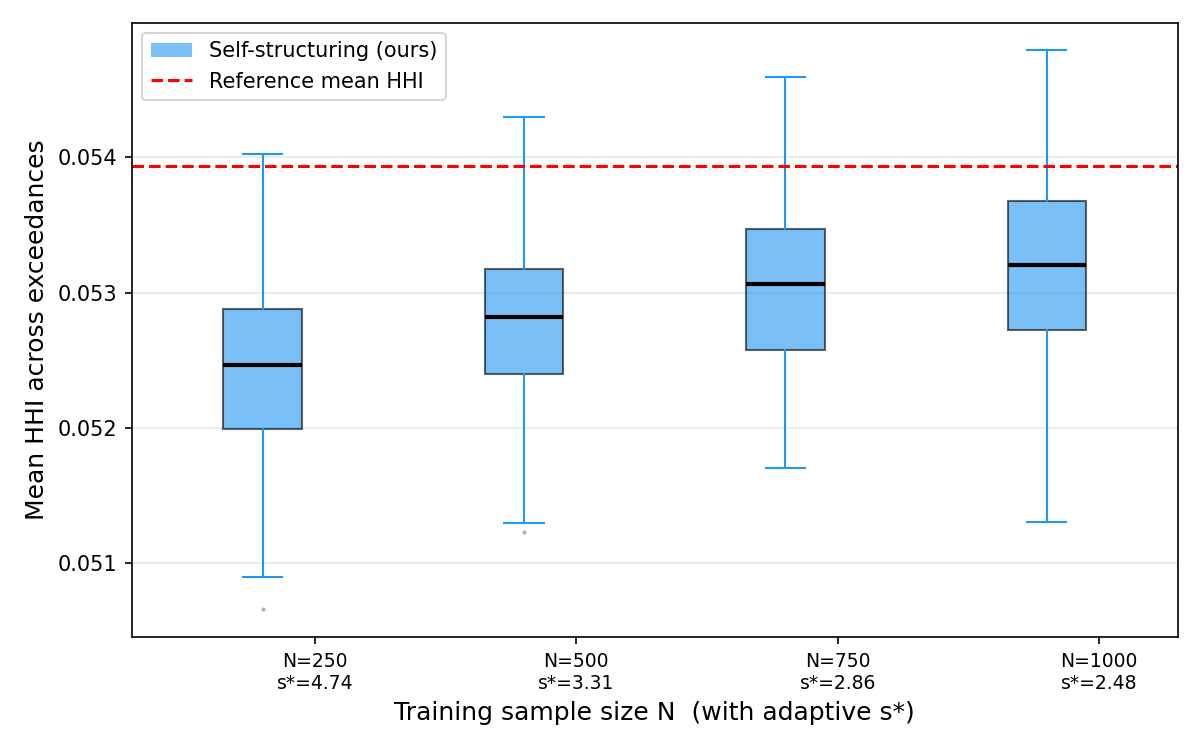}
        \caption{Conditional mean of $G_{\mathrm{HHI}}$}
        \label{fig:HHI_coverage_clearing}
    \end{subfigure}\hfill
    \begin{subfigure}[t]{0.32\textwidth}
        \centering
        \includegraphics[width=\linewidth]{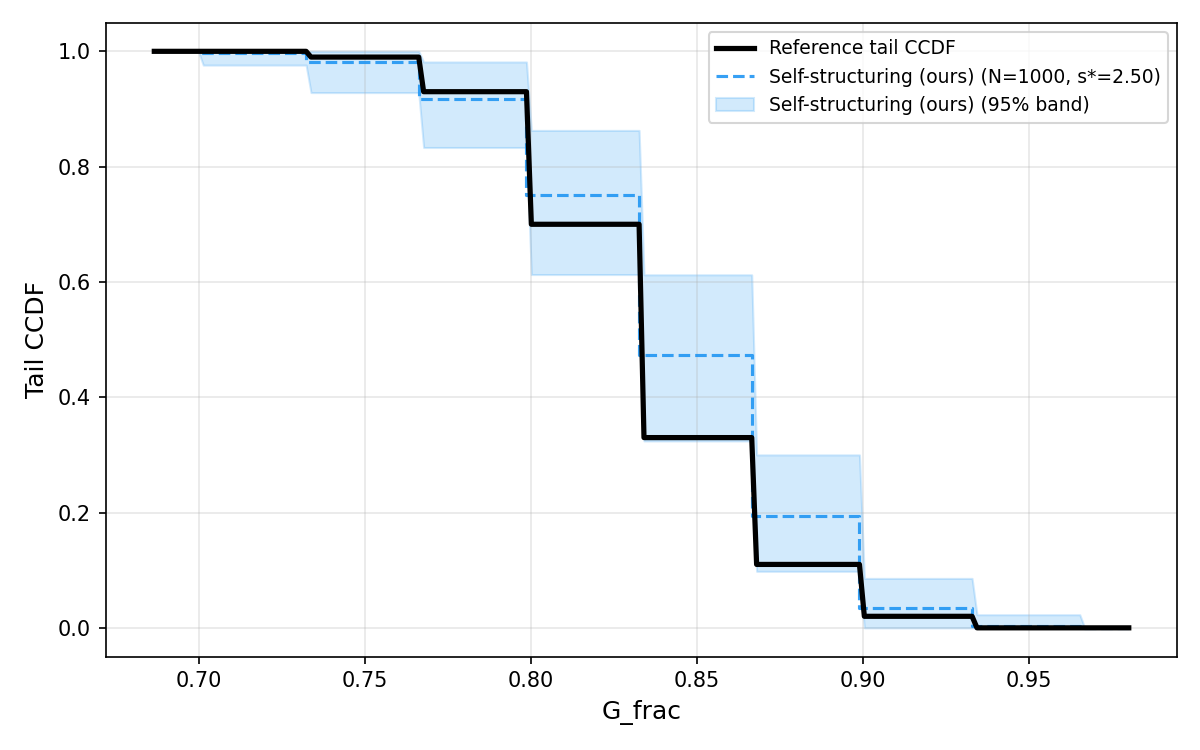}
        \caption{Conditional CCDF of $G_{\mathrm{frac}}$}
        \label{fig:fraction_defaults_clearing}
    \end{subfigure}\hfill
    \begin{subfigure}[t]{0.32\textwidth}
        \centering
        \includegraphics[width=\linewidth]{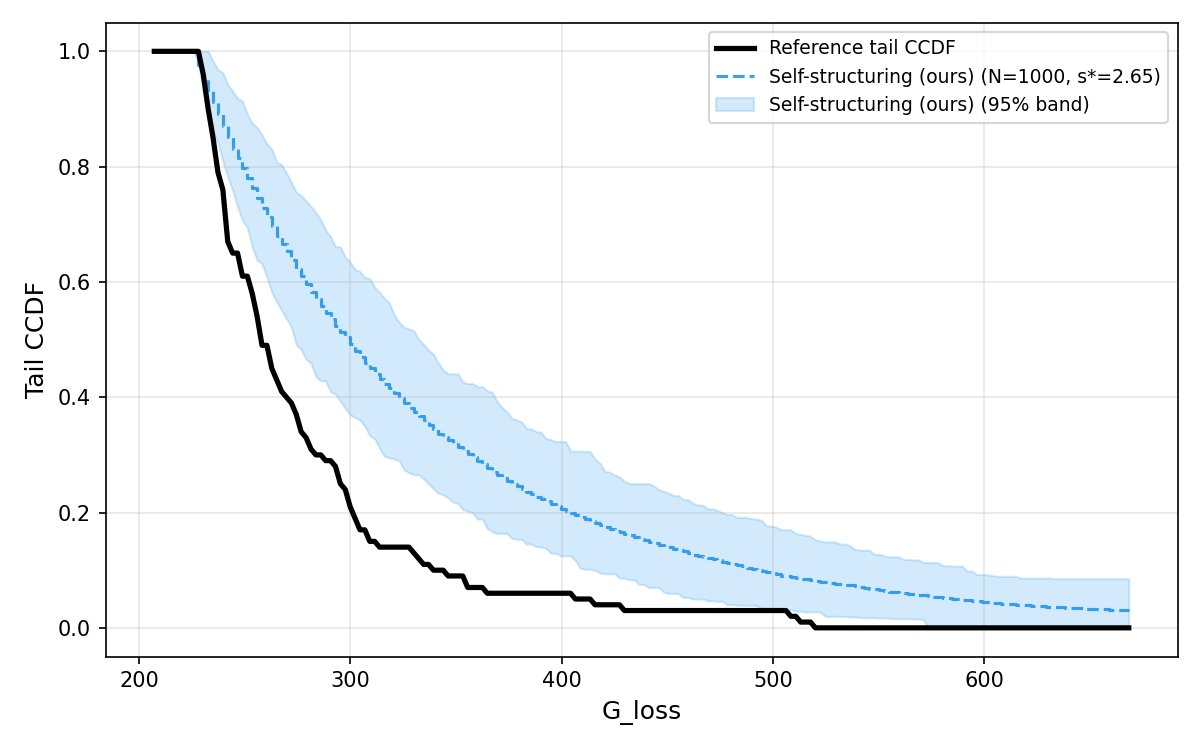}
        \caption{Conditional CCDF of $G_{\mathrm{loss}}$}
        \label{fig:conditonal_loss_clearing}
    \end{subfigure}
  \caption{Diagnostic recovery for the fully connected clearing network at $u=q_{0.999}$. Left: 
distribution of the estimated conditional mean of $G_{\mathrm{HHI}}$ across replications, with 
the reference value marked. Middle and right: pointwise $95\%$ empirical bands for the 
conditional CCDFs of $G_{\mathrm{frac}}$ and $G_{\mathrm{loss}}$, with the reference CCDF 
overlaid.}
    \label{fig:clearing_diagnostics}
\end{figure}
\section{PROOFS}\label{sec:proofs}
In the following proofs, let $L_u(\zz) = u^{-1}L(\qq(t(u)\zz))$. Further, let $\mv{\tilde T}_{s,u}(\zz) = [t(u)]^{-1}\mv{\Lambda}(\mv T_s(\qq(t(u)\zz)))$.

\noindent \textbf{Proof of Lemma~\ref{lem:eisenberg_clearing}}: 
Let $\zz_t\to\zz$. Since $\bar p_j(\xx)
    =
    (\mv a_j^\intercal \xx-r_j)_+$,
we have
\[
    t^{-1}\bar p_j(t\zz_t)
    =
    \bigl(\mv a_j^\intercal \zz_t-t^{-1}r_j\bigr)_+
    \longrightarrow
    (\mv a_j^\intercal \zz)_+
    =
    \mv a_j^\intercal \zz
    =:
    \bar p_j^\star(\zz),
\]
where the last equality uses $\mv a_j\ge \mv 0$ and $\zz\ge \mv 0$. Define
\[
    \bar p_j^{(t)}(\zz_t)
    :=
    t^{-1}\bar p_j(t\zz_t),
    \qquad
    p^{(t)}(\zz_t)
    :=
    t^{-1}p(t\zz_t),
    \qquad
    x_j^{(t)}(\zz_t)
    :=
    t^{-1}x_j(t\zz_t).
\]
Dividing the clearing fixed point by $t$ gives, for each $j=1,\ldots,m$,
\[
    p_j^{(t)}(\zz_t)
    =
    \min\Bigl\{
        \bar p_j^{(t)}(\zz_t),\;
        x_j^{(t)}(\zz_t)
        +
        \sum_{k=1}^m \lambda_{kj}p_k^{(t)}(\zz_t)
    \Bigr\}                                               =
    [\Phi_t(p^{(t)}(\zz_t);\zz_t)]_j,
\]
where the map $\Phi_t(\cdot\,;\zz_t)$ is defined componentwise by
\[
    [\Phi_t(p;\zz_t)]_j
    :=
    \min\Bigl\{
        \bar p_j^{(t)}(\zz_t),\;
        x_j^{(t)}(\zz_t)
        +
        \sum_{k=1}^m \lambda_{kj}p_k
    \Bigr\}.
\]
By hypothesis, $x_j^{(t)}(\zz_t)\to x_j^\star(\zz)$, and from the preceding
calculation, $\bar p_j^{(t)}(\zz_t)\to \bar p_j^\star(\zz)$. Hence
$\Phi_t(\cdot\,;\zz_t)$ converges uniformly on bounded sets to the limiting map
$\Phi^\star(\cdot\,;\zz)$, defined by
\[
    [\Phi^\star(p;\zz)]_j
    :=
    \min\Bigl\{
        \bar p_j^\star(\zz),\;
        x_j^\star(\zz)
        +
        \sum_{k=1}^m \lambda_{kj}p_k
    \Bigr\}.
\]
Since $\mv 0
    \le
    p^{(t)}(\zz_t)
    \le
    \bar p^{(t)}(\zz_t)$, and $\bar p^{(t)}(\zz_t)\to\bar p^\star(\zz)$, the sequence
$\{p^{(t)}(\zz_t)\}_{t\ge 1}$ is eventually bounded. Let
$p^{(t_m)}(\zz_{t_m})$ be any convergent subsequence, and write
$p^{(t_m)}(\zz_{t_m})\to \bar p$. 
Then
\[
\bigl\|\bar p-\Phi^\star(\bar p;\zz)\bigr\|
\le
\bigl\|\bar p-p^{(t_m)}(\zz_{t_m})\bigr\|                                    
+
\bigl\|p^{(t_m)}(\zz_{t_m})
      -\Phi^\star(p^{(t_m)}(\zz_{t_m});\zz)\bigr\|                            
+
\bigl\|\Phi^\star(p^{(t_m)}(\zz_{t_m});\zz)
      -\Phi^\star(\bar p;\zz)\bigr\|.
\]
The first term converges to zero by the definition of the subsequence. The
third term converges to zero because $\Phi^\star(\cdot\,;\zz)$ is continuous.
For the second term, use the fixed point identity
\[
    p^{(t_m)}(\zz_{t_m})
    =
    \Phi_{t_m}(p^{(t_m)}(\zz_{t_m});\zz_{t_m}).
\]
Since the subsequence remains in a bounded set and
$\Phi_t(\cdot\,;\zz_t)\to\Phi^\star(\cdot\,;\zz)$ uniformly on bounded sets, we
obtain
\[
    \bigl\|p^{(t_m)}(\zz_{t_m})
      -\Phi^\star(p^{(t_m)}(\zz_{t_m});\zz)\bigr\|
    \longrightarrow 0.
\]
Therefore $\bar p=\Phi^\star(\bar p;\zz)$. Thus $\bar p$ is a fixed point of the limiting clearing system. By the assumed
uniqueness of the limiting clearing vector, $\bar p=p^\star(\zz)$. Since every convergent subsequence of $\{p^{(t)}(\zz_t)\}_{t\ge 1}$ has the
same limit $p^\star(\zz)$, the full sequence converges:
\[
    t^{-1}p(t\zz_t)
    =
    p^{(t)}(\zz_t)
    \longrightarrow
    p^\star(\zz).
\]

Finally, the aggregate unpaid-obligation loss satisfies
\[
    L(t\zz_t)
    =
    \sum_{j=1}^m
    \left(
        \bar p_j(t\zz_t)-p_j(t\zz_t)
    \right).
\]
Hence
\[
    t^{-1}L(t\zz_t)
    = 
    \sum_{j=1}^m
    \left(
        \bar p_j^{(t)}(\zz_t)-p_j^{(t)}(\zz_t)
    \right)                                                 \to 
    \sum_{j=1}^m
    \left(
        \bar p_j^\star(\zz)-p_j^\star(\zz)
    \right)
    =
    L^\star(\zz).
\]
Therefore $L(\cdot)$ satisfies Assumption~\ref{assume:loss} with
\[
    L^\star(\zz)
    =
    \sum_{j=1}^m
    \left(
        \bar p_j^\star(\zz)-p_j^\star(\zz)
    \right).
\]
This proves the lemma. \qed

\noindent \textbf{Proof of Proposition~\ref{prop:ldp_stress_perspective}:}
Let
\[
    \YY_u := [t(u)]^{-1}\mv{\Lambda}(\XX),
    \qquad
    L_u(\zz):=u^{-1}L(\qq(t(u)\zz)).
\]
Then $\{L(\XX)>u\}=\{L_u(\YY_u)>1\}$. Moreover, by the definition
$\mathcal M_{\delta,u}=\qq(t(u)\mathcal S_\delta^c)$, the event
$\{\XX\in\mathcal M_{\delta,u}\}$ is represented on the standardized scale as
$\{\YY_u\in\mathcal S_\delta^c\}$. Hence
\[
\mathbb P\!\left(\XX\in\mathcal M_{\delta,u}\,\middle|\,L(\XX)>u\right)
=
\frac{
    \mathbb P(\YY_u\in \mathcal S_\delta^c,\ L_u(\YY_u)>1)
}{
    \mathbb P(L_u(\YY_u)>1)
}.
\]
By \cite{deo2025achieving}, Theorem~2,
\[
\lim_{u\to\infty}[t(u)]^{-1}
\log \mathbb P(L_u(\YY_u)>1)
=
-\kappa^\star,
\]
where
\[
\kappa^\star
=
\inf_{\zz\in\Gamma}\varphi^\star(\zz),
\qquad
\Gamma
=
\{\zz:L^\star(\qq^\star \zz^{1/\mv\alpha})\ge 1\}.
\]
For the numerator, define the closed set
\[
    F_\delta := \{\zz:d(\zz,\mathcal S)\ge \delta\}.
\]
Since $\mathcal S_\delta^c\subseteq F_\delta$, we have
\[
    \mathbb P(\YY_u\in \mathcal S_\delta^c,\ L_u(\YY_u)>1)
    \le
    \mathbb P(\YY_u\in F_\delta,\ L_u(\YY_u)>1).
\]
The large deviations upper bound gives
\[
\limsup_{u\to\infty}[t(u)]^{-1}
\log \mathbb P(\YY_u\in \mathcal S_\delta^c,\ L_u(\YY_u)>1)
\le
-
\inf_{\zz\in F_\delta\cap\Gamma}\varphi^\star(\zz).
\]
We next show that the constrained infimum is strictly larger than
$\kappa^\star$. Suppose, to the contrary, that
\[
\inf_{\zz\in F_\delta\cap\Gamma}\varphi^\star(\zz)
=
\kappa^\star.
\]
Then there exists a sequence $\zz_n\in F_\delta\cap\Gamma$ such that
$\varphi^\star(\zz_n)\downarrow\kappa^\star$. Since $\varphi^\star$ is a good
rate function, the sequence eventually lies in a compact sublevel set. Passing
to a subsequence if necessary, let $\zz_n\to\bar\zz$. The sets $F_\delta$ and
$\Gamma$ are closed, so $\bar\zz\in F_\delta\cap\Gamma$. By lower
semicontinuity of $\varphi^\star$,
\[
    \varphi^\star(\bar\zz)\le \kappa^\star.
\]
Since $\bar\zz\in\Gamma$ and $\kappa^\star$ is the minimum of
$\varphi^\star$ over $\Gamma$, equality holds. Hence
$\bar\zz\in\mathcal S$. But $\bar\zz\in F_\delta$ implies
$d(\bar\zz,\mathcal S)\ge\delta$, a contradiction. Therefore
\[
\eta_\delta
:=
\inf_{\zz\in F_\delta\cap\Gamma}\varphi^\star(\zz)
-
\kappa^\star
>0.
\]
Combining the numerator and denominator estimates gives
\[
\begin{aligned}
\limsup_{u\to\infty}[t(u)]^{-1}
\log
\mathbb P\!\left(\XX\in\mathcal M_{\delta,u}\,\middle|\,L(\XX)>u\right)
&\le
-\inf_{\zz\in F_\delta\cap\Gamma}\varphi^\star(\zz)
+\kappa^\star \\
&=
-\eta_\delta
<0.
\end{aligned}
\]
This proves the proposition.\qed

\noindent\textbf{Proof of Proposition~\ref{prop:ldp_transform}:} Note that from \cite{deo2025achieving}, Theorem 1, $\YY_u\in \LDP(t,\varphi^\star)$, and that
$\YY_{s,u}=\tilde{\mv T}_{s,u}(\YY_u)$. To complete the proof, we show that
$\tilde{\mv T}_{s,u}\to \mv f_s$ uniformly on the sets
$\{\zz:\varphi^\star(\zz)\le \gamma\}$ for every $\gamma>0$, where
$f_s(\zz)=s^{\alpha_\star}\zz$. Since $\varphi^\star$ is a good rate function, these sets are compact, so it is
enough to establish uniform convergence on compact sets. The approximate contraction principle
(see \cite{dembo2009large}, Theorem 4.2.23) then yields
\[
\YY_{s,u}=\tilde{\mv T}_{s,u}(\YY_u)\in\LDP(t,\varphi_s^\star),
\]
where $\varphi_s^\star(\zz)=\inf\{\varphi^\star(\yy):f_s(\yy)=\zz\}$. Since $f_s(\yy)=s^{\alpha_\star}\yy$, 
$\varphi_s^\star(\zz)=\varphi^\star(\zz s^{-\alpha_\star})
=s^{-\alpha_\star}\varphi^\star(\zz)$,
where the last equality follows from the homogeneity of $\varphi^\star$
(see \cite{deo2025achieving}, Lemma 2(b)). The rest of the proof shows uniform convergence. 

From \cite{deo2025achieving}, Corollary 1, $[\qq(t(u))]^{-1} \mv T_s(\qq(t(u) \zz)) \to (s^{\alpha_\star} \zz)^{1/\mv \alpha}$ uniformly over compact sets. As a consequence, $ \mv T_s(\qq(t(u) \zz)) = \qq(t(u)) (s^{\alpha_\star} \zz)^{1/\mv \alpha}(1+\mv r_u(\zz))$ where $\sup_{\zz\in K}\|\mv r_u(\zz)\|\to 0$ as $u\to\infty$ for any compact set $K$. In particular $(s^{\alpha_\star} \zz)^{1/\mv \alpha}(1+\mv r_u(\zz))$ stays within a compact set, call it $C$ for all $u$ large enough whenever $\zz\in K$. Since  $ \Lambda_i\in \RV(\alpha_i)$ and is increasing, \cite{deHaan}, Proposition B.1.9 applied coordinate wise implies that $\mv \Lambda\!\left(\qq(t(u))\,(s^{\alpha_\star}\zz)^{1/\mv \alpha}(1+\mv r_u(\zz))\right)
\sim t(u)\, s^{\alpha_\star}\zz$, uniformly over $C$ (since $\mv\Lambda^{\leftarrow} =\qq$). Therefore, we have $\tilde {\mv T}_{s,u}(\zz) \to s^{\alpha_\star} \zz$ uniformly over compact which concludes the proof. \qed

\noindent \textbf{Proof of Theorem~\ref{thm:representative_transform}:} Recall that from  Proposition~\ref{prop:ldp_transform}, $\YY_{s,u}\in \LDP(t,\varphi^\star_s)$. Next, observe that
\[
\{L(\tilde \XX_s)>u\}
=
\{L(\mv T_s(\qq(t(u)\YY_u)))>u\}
=
\{L(\qq(\mv\Lambda(\mv T_s(\qq(t(u)\YY_u)))) )>u\}
=
\{L_u(\YY_{s,u})>1\}.
\]
The third equality follows since $\YY_{s,u} = \tilde{\mv T}_{s,u}(\YY_u)$ where $\tilde{\mv T}_{s,u}(\zz)= [t(u)]^{-1} \mv\Lambda(\mv T_s(\qq(t(u)\zz)))$. Therefore, 
\[
\mathbb P\left(\tilde \XX_s \in \mathcal M_{\delta,u} \mid L(\tilde \XX_s) > u\right) = \mathbb P(\YY_{s,u} \in \mathcal S_{\delta}^c \mid L_u(\YY_{s,u})>1).
\]
Following the proof of Proposition~\ref{prop:ldp_stress_perspective} now yields 
\[
\limsup_{u\to\infty}[t(u)]^{-1}
\log \mathbb P\!\left(\tilde\XX_s \in \mathcal M_{\delta,u} \,\middle|\, L(\tilde\XX_s)>u\right)
\le
-\big(\kappa^\star_s+c_s(\delta)\big)+\kappa^\star_s
=
-c_s(\delta)
\]
where $\kappa_s^\star = \min\{\varphi^\star_s(\zz):\zz\in \Gamma\}$. With $\varphi^\star_s(\zz) = s^{-\alpha_{\star}}\varphi^\star(\zz)$, $\kappa^\star_s = s^{-\alpha_\star}\kappa^\star$. In particular $c_{s}(\delta) > 0$.\qed

\noindent \textbf{Proof of Proposition~\ref{prop:ease_of_learning}:} Note that from Proposition~\ref{prop:ldp_transform}, $\YY_{s,u}\in \LDP(t,\varphi^\star_s)$ and that $\{L(\tilde \XX_s) > u \} = \{L_u(\YY_{s,u}) > 1\}$. Applying the proof of \cite{deo2025achieving}, Theorem 2, with $\varphi^\star_s$ instead of $\varphi^\star$ yields 
\[
\lim_{u\to\infty} [t(u)]^{-1}\log \mathbb P\left(L(\tilde \XX_s) > u \right) = -\inf\{s^{-\alpha_\star} \varphi^\star(\zz): L^\star(\qq^\star\zz^{1/\mv\alpha})\geq 1\} = -s^{-\alpha_\star}\kappa^\star. 
\]
Combining the above asymptotic with the tail probability bound from earlier on, 
\[
\frac{\log (P(L(\XX)>u))} {\log (P(L(\tilde\XX_s ) > u)}\to s^{\alpha_\star} \qed
\]
\bibliographystyle{plainnat}

% AUTHOR: Include your bib file here
\bibliography{ref}
\end{document}